\documentclass[12pt]{article}
\usepackage[utf8]{inputenc}
\usepackage{authblk}
\usepackage{setspace}
\usepackage[margin=1.0in]{geometry}
\usepackage{graphicx}
\usepackage{subcaption}
\usepackage{lineno}
\usepackage{url}            
\usepackage{lastpage}
\usepackage{amsmath, amssymb, amsthm, amsfonts}
\usepackage{color}
\usepackage{caption}
\usepackage{mathtools}
\usepackage{mathrsfs} 
\usepackage[algoruled]{algorithm2e}
\usepackage{natbib}
\usepackage{enumitem}
\usepackage[colorlinks=true, urlcolor=blue, linkcolor=blue, citecolor=blue]{hyperref}
\usepackage{stmaryrd}
\usepackage{graphbox}   

\DeclareMathOperator{\rank}{rank}

\DeclareMathOperator{\krank}{kr}
\DeclareMathOperator{\tr}{tr}

\DeclareMathOperator*{\vectorize}{vec}
\DeclareMathOperator*{\diag}{diag}

\DeclareMathOperator*{\argmin}{argmin}

\newtheorem{proposition}{Proposition}
\newtheorem{corollary}{Corollary}

\newtheorem{lemma}{Lemma}

\providecommand{\keywords}[1]{\textbf{\textit{Keywords:}} #1}

\title{\textbf{Latent Functional PARAFAC for modeling multidimensional longitudinal data}}

\author[1]{Lucas Sort}
\author[1]{Laurent Le Brusquet}
\author[1]{Arthur Tenenhaus}

\affil[1]{\textit{Université Paris-Saclay, CNRS, CentraleSupélec, Laboratoire des Signaux et Systèmes, Gif-sur-Yvette, 91190 France}}
\affil[ ]{\textit{\texttt{lucas.sort@centralesupelec.fr}}}

\date{}

\onehalfspacing

\begin{document}

\maketitle

\begin{abstract}
In numerous settings, it is increasingly common to deal with longitudinal data organized as high-dimensional multi-dimensional arrays, also known as tensors. Within this framework, the time-continuous property of longitudinal data often implies a smooth functional structure on one of the tensor modes. To help researchers investigate such data, we introduce a new tensor decomposition approach based on the CANDECOMP/PARAFAC decomposition. Our approach allows for representing a high-dimensional functional tensor as a low-dimensional set of functions and feature matrices. Furthermore, to capture the underlying randomness of the statistical setting more efficiently, we introduce a probabilistic latent model in the decomposition. A covariance-based block-relaxation algorithm is derived to obtain estimates of model parameters. Thanks to the covariance formulation of the solving procedure and thanks to the probabilistic modeling, the method can be used in sparse and irregular sampling schemes, making it applicable in numerous settings. We apply our approach to help characterize multiple neurocognitive scores observed over time in the Alzheimer’s Disease Neuroimaging Initiative (ADNI) study. Finally, intensive simulations show a notable advantage of our method in reconstructing tensors.  
\end{abstract}

\keywords{\textit{Longitudinal data, Tensor decomposition, Functional data}}

\newpage

\section{Introduction}
\label{sec:introduction}

When handling high-dimensional tensor data, it is common to seek a low-dimensional representation of the data to capture the latent information more efficiently and enhance characterization. Indeed, This low-dimensional representation often helps reduce the data's high-dimensionality-induced complexity. In this setting, it is increasingly common to manipulate tensors with some underlying smooth structure in one of the modes. Additionally, since data is often collected through a sampling process, tensors commonly feature a mode (the "sample mode") on which the induced sub-tensors can be seen as samples drawn from a random tensor distribution. In longitudinal studies involving tensor-valued data, these two properties are often encountered: the time-continuous property naturally implies that one of the modes has a smooth structure, while the sampling setting introduces randomness along one mode of the tensor, as described previously. Since integrating these smooth and probabilistic characteristics would lead to more interpretable results, extending the functional and multiway data analysis procedures appears essential to leverage the increasing complexity of their datasets.

In the tensor literature, the CANDECOMP/PARAFAC (CP) decomposition (\cite{Harshman1970}, \cite{Carroll1970}), also known as the Canonical Polyadic Decomposition (CPD), is a well-known dimension reduction approach used to represent a tensor as a sum of rank-1 tensors. The method effectively reduces the high-dimensionality of a tensor to a collection of matrices with a matching number of columns $R$, also called the rank. The Tucker decomposition (\cite{Tucker1966}) is a closely related technique reducing the dimension of a tensor to a collection of matrices, with a varying number of columns $r_1, \dots, r_D$, and a (smaller) tensor, of dimensions $r_1 \times \dots \times r_D$, called the tensor core. Both approaches have been studied intensively and applied to numerous statistical problems (\cite{Sidiropoulos2017}), including regression (\cite{Zhou2013}, \cite{Lock2018b}). Finally, to account for potential randomness in a tensor, \cite{Lock2018a} introduced a probabilistic model in CPD on one of the modes. Their approach goes further by allowing the integration of auxiliary covariates in the probabilistic distribution, paving the way to supervised tensor decomposition.

Over the past decades, numerous methods have been introduced to investigate data with some underlying smooth structure. In the literature, such data is often referred to as functional data (\cite{Ramsay2005},  \cite{Wang2016}). In this context, the Functional Principal Component Analysis (FPCA) is a well-known method used to reduce the underlying infinite-dimensional space of functional data to the low-dimensional space of the most informative functional modes of variation, given by the eigenfunctions of the covariance operator. (\cite{Rice1991}) introduced a smoothing penalty in the mean and covariance estimations, allowing the retrieval of smooth eigenfunctions. In their seminal paper, \cite{Yao2005} proposed using a weighted linear least square procedure to estimate the covariance surface and obtain smooth eigenfunctions. Various extensions of the FPCA were proposed, notably for multilevel data (\cite{Di2009}) and multivariate data (\cite{Happ2018}). Similarly to the tensor setting, these developments were applied to other statistical problems, like regression (\cite{Morris2015}).

Conversely, little has been proposed to investigate tensors with a smooth structure. In this context, most methods rely on smoothness penalties, notably for tensor decomposition (\cite{Yokota2016}, \cite{imaizumi17a}). Extending the supervised CPD setting of \cite{Lock2018a}, \cite{Guan2023} proposed an interesting alternative approach to decompose an order-3 smooth tensor using a smoothness penalty. A significant advantage of this approach is that it is robust to sparse sampling schemes, as commonly encountered in longitudinal settings. Using a different approach, \cite{Han2023} extended the functional singular value decomposition to tensor-valued data with a Reproducing-Kernel Hilbert Spaces (RKHS) methodology. However, the existing literature lacks a general framework extending CPD to functional tensors of any order, capable of handling sparse and irregular sampling while accommodating the inherent randomness in sampling. Indeed, although we think it could be adapted, the approach proposed by \cite{Guan2023} does not support irregular sampling and is limited to order-3 tensors. Additionally, the method introduced in \cite{Han2023} lacks probabilistic modeling and also cannot handle irregular schemes, a substantial limitation in numerous longitudinal settings. 

In this context, we introduce a new probabilistic and functional PARAFAC model for functional tensors of any order with an underlying probabilistic structure. Our method uses a covariance-based algorithm that requires the estimation of (cross-)covariance surfaces of the functional entries of the tensor, allowing the decomposition of highly sparse and irregular data. Furthermore, we introduce a Bayesian approach, inspired by \cite{Yao2005}, to predict individual-mode factors and fit the data more accurately under mild assumptions.

The paper is organized as follows. In Section \ref{section:model}, we introduce notations and existing methods before presenting the proposed method. In Section \ref{section:estimation} we provide the estimation setting and introduce the solving procedure for the method. An estimation method for sample-mode vectors is shown afterward. Then, we propose to test our method in Section \ref{section:application} to help characterize multiple neurocognitive scores observed over time in the Alzheimer’s Disease Neuroimaging Initiative study (ADNI) study. In Section \ref{section:simulations}, we present results obtained on simulated data. Finally, we discuss our method's limitations and possible extensions in Section \ref{section:discussion}. All proofs are given in Appendix \ref{sec:annex:proofs}. 

\section{Model}
\label{section:model}

\subsection{Notations}

\subsubsection{Tensor notations}

In the following, we denote a vector by a \textit{small} bold letter $\bold{a}$, a matrix by a \textit{capital} bold letter $\bold{A}$, and a tensor by a \textit{calligraphic} letter $\mathcal{A}$. Scalars are denoted so they do not fall in any those categories. Additionally, we denote $[D] = \{1, \dots, D\}$. Elements of a vector $\bold{a}$ are denoted $a_j$, of a matrix $\bold{A}$, $a_{jj'}$, and of a tensor $\mathcal{A}$ of order $D$, $a_{j_1 \dots j_D}$, often abbreviated using the vector of indices $\bold{j} = (j_1, \dots, j_D)$ as $a_{\bold{j}}$. The $j$th column of $\bold{A}$ is denoted $\bold{a}_j$. The Frobenius norm of a matrix is defined as $\|\bold{A}\|_F = (\tr(\bold{A}^\top \bold{A}))^{1/2} = (\sum_{j, j'} (a_{jj'})^2)^{1/2}$. We extend this definition to tensors as $\|\mathcal{A}\|_F = (\sum_{\bold{j}} (a_{\bold{j}})^2)^{1/2}$. We denote the mode-$d$ matricization of a tensor $\mathcal{X}$ as the matrix $\bold{X}_{(d)} \in \mathbb{R}^{p_d \times p_{(-d)}}$, where $p_{(-d)} = \prod_{d'\neq d} p_{d'}$, which columns correspond to mode-$d$ fibers of the tensor $\mathcal{X}$. We denote the vectorization operator, stacking the columns of a matrix $\bold{A}$ into a column vector, as $\vectorize(\bold{A})$. In this context, we define the mode-$d$ vectorization as $\bold{x}_{(d)} = \vectorize(\bold{X}_{(d)}) \in \mathbb{R}^{P\times 1}$, where $P = \prod_{d'} p_{d'}$. The mode-$1$ vectorization of $\mathcal{X}sor$ is sometimes simply denoted $\bold{x} = \bold{x}_{(1)}$. For two matrices $\bold{A} \in \mathbb{R}^{p\times d}$ and $\bold{B} \in \mathbb{R}^{p' \times d'}$, the Kronecker product between $\bold{A}$ and $\bold{B}$ is defined as
\begin{equation*}
    \bold{A} \otimes \bold{B} = \begin{bmatrix}
        a_{11} \bold{B} &  \dots & a_{1d} \bold{B} \\
        \vdots & \ddots & \vdots \\
        a_{p1} \bold{B} &  \dots & a_{pd} \bold{B}
    \end{bmatrix} \in \mathbb{R}^{pp'\times dd'}
\end{equation*}
The Khatri-Rao product between two matrices with fixed number of columns $\bold{A} = [\bold{a}_1 \dots \bold{a}_d] \in \mathbb{R}^{p\times d}$ and $\bold{B} = [\bold{b}_1 \dots \bold{b}_d]\in \mathbb{R}^{p' \times d}$, is defined as a column-wise Kronecker product:
\begin{equation*}
    \bold{A} \odot \bold{B} = \begin{bmatrix}
        \bold{a}_{1} \otimes \bold{b}_{1} &  \dots & \bold{a}_{d} \otimes \bold{b}_{d}
    \end{bmatrix} \in \mathbb{R}^{pp'\times d}
\end{equation*}
Finally, the Hadamard product between two matrices with similar dimensions $\bold{A} \in \mathbb{R}^{p\times d}$ and $\bold{B} \in \mathbb{R}^{p\times d}$ is defined as the element-wise product of the two matrices:
\begin{equation*}
    \bold{A} * \bold{B} = \begin{bmatrix}
    a_{11}b_{11} &\dots &a_{1d}b_{1d}\\
    \vdots &\ddots &\vdots\\
    a_{p1}b_{p1} &\dots &a_{pd}b_{pd}
    \end{bmatrix} \in \mathbb{R}^{p\times d}
\end{equation*}

\subsubsection{Functional notations}

Considering an interval $\mathcal{I} \subset \mathbb{R}$, we consider now the space $L^2(\mathcal{I})$ of square integrable functions on $\mathcal{I}$ (i.e. $\int_{\mathcal{I}} f(s)^2 \text{d}s < \infty$). The inner product in $L^2(\mathcal{I})$ is defined for any $f \in L^2(\mathcal{I})$ and $g \in L^2(\mathcal{I})$ by $\langle f, g \rangle_{L^2} = \int_{\mathcal{I}} f(s)g(s) \text{d}s$. In the multivariate functional data setting (see \cite{Happ2018}), we can define the inner product between two multivariate functions $\bold{f} \in L^2(\mathcal{I}){}^p$ and $\bold{g} \in L^2(\mathcal{I}){}^p$ as
\begin{equation*}
    \langle \bold{f}, \bold{g} \rangle = \sum_{j=1}^p \langle f_j, g_j \rangle_{L^2} = \sum_{j=1}^p \int_{\mathcal{I}} f_j(s) g_j(s)\text{d} s
\end{equation*}
which defines a Hilbert Space. Considering now the space $\mathscr{H}=L^2(\mathcal{I}){}^{p_1 \times \dots \times p_D}$ of tensors of order $D$ with entries in $L^2(\mathcal{I})$. We define the inner product between two functional tensors $\mathcal{F} \in \mathscr{H}$ and $\mathcal{G} \in \mathscr{H}$ as 
\begin{equation}
    \label{eq:innerprod}
    \langle \mathcal{F}, \mathcal{G} \rangle = \sum_{\bold{j}} \langle f_{\bold{j}}, g_{\bold{j}} \rangle_{L^2} = \sum_{\bold{j}} \int_{\mathcal{I}} {f_{\bold{j}}}(s) {g_{\bold{j}}}(s) \text{d} s
\end{equation}
where we denote $f_{\bold{j}}(s) = (f(s))_{\bold{j}}$. Using this inner product, it can be shown easily that the space $\mathscr{H}$ is a Hilbert space.


\subsection{Existing methods}

The CANDECOMP/PARAFAC (CP) decomposition (\cite{Harshman1970}, \cite{Carroll1970}), also referred to as the Canonical Polyadic Decomposition (CPD), is a powerful and well-known method in the tensor literature. Its purpose is to represent any tensor as a sum of rank-$1$ tensors. In $\mathbb{R}^{p_1 \times \dots \times p_D}$, the rank-$R$ CPD is defined as

\begin{equation}
\label{eq:PARAFAC}
 \sum_{r=1}^{R} \mathbf{a}_{1r} \circ \dots \circ \mathbf{a}_{Dr} = \llbracket \bold{A}_1; \dots; \bold{A}_D \rrbracket
\end{equation}
where $\mathbf{a}_{dr} \in \mathbb{R}^{p_d}$ and $\mathbf{A}_d = [\mathbf{a}_{d1}, \dots, \mathbf{a}_{dR}] \in \mathbb{R}^{p_d \times R}$. Here, each vector $\mathbf{a}_{dr}$ captures latent information along the mode $d$ of the tensor. Consequently, the CPD can be effectively employed as a dimension reduction technique, particularly compelling in the tensor setting, as the number of variables grows exponentially with the order of the tensor.

To approximate any tensor $\mathcal{X} \in \mathbb{R}^{p_1 \times \dots \times p_D}$ using a CPD, factor matrices $\bold{A}_d$ are estimated by minimizing the squared difference between the original tensor $\mathcal{X}$ and its CPD approximation,
\begin{equation}
\label{eq:PARAFAC:optim}
\min_{\bold{A}_1,\dots, \bold{A}_D} \| \mathcal{X} - \llbracket \bold{A}_1; \dots; \bold{A}_D \rrbracket \|_F^2
\end{equation}
The original optimization process used an Alternating Least Squares (ALS) procedure (\cite{Carroll1970}, \cite{Harshman1970}), where, iteratively, each factor matrix is updated while keeping the others fixed until convergence is reached. The procedure belongs to the more general class of block relaxation algorithms, known to be locally and globally convergent under mild assumptions (\cite{Lange2013}).

When trying to decompose a tensor with functional entries, this method may seem inefficient as it fails to integrate the inherent continuous structure of functional spaces. Additionally, the CP model assumes a consistent structure on each mode across dimensions. Therefore, it cannot handle schemes where the functional entries are sampled irregularly, e.g., at different time points. To answer this last problem, several alternatives to the CP model were introduced, especially the PARAFAC2 model (\cite{Hars1972b}, \cite{Kiers1999}), which was widely employed but still does not integrate any smoothness structure into the reconstructed tensor.

In the functional setting, the Karhunen–Loève theorem is often used to retrieve a low-dimensional representation of a high (or infinite) dimensional random process. The theorem guarantees the existence of an orthonormal basis ${\Phi_k}$ and a sequence ${\xi_k}$ of independent random variables for any squared integrable random process $X$, such that:
\begin{equation}
\label{eq:FPCA}
    X(t) = \sum_k^\infty \xi_k \Phi_k(t)
\end{equation}
By truncating the above sum to a limited number of functions, we can recover a good enough representation of $X$. More recently, this decomposition was extended to multivariate functional data, allowing the representation of any multivariate random process as a sum of multivariate orthonormal functions and scores. By arranging the sequence of independent random variables ${\xi_k}$ in descending order of variance, this decomposition can be viewed as an extension of the well-known principal component analysis (PCA). Similarly to the PCA, estimating parameters in model (\ref{eq:FPCA}) relies on the eigen decomposition of the covariance operator of the random process. As evoked previously, numerous approaches have been proposed to achieve this. The most notable of which, \cite{Yao2005}, proposed a linear least square smoothing procedure with asymptotic guarantees, allowing estimation of the covariance surface in sparse and irregular data sampling settings. Furthermore, using simple assumptions on the joint distribution of principal components and residual errors, they introduce an efficient approach for estimating principal components.

When dealing with multivariate or tensor-structured functional data, it appears natural to extend this decomposition. As evoked in Section \ref{sec:introduction}, various approaches have been introduced to extend this decomposition to multivariate functional data. In the tensor-structured functional data setting, vectorizing the tensor appears as a straightforward way to proceed. However, similar to before, as the number of processes grows exponentially with the order of the tensor, this procedure would be computationally unfeasible for large tensors, as it would require estimating orthonormal functions for each functional feature. In addition, the vectorization of the tensor would lead to the loss of the tensor structure, and the resulting representation would still be high-dimensional, causing the approach to be inefficient.

\subsection{Proposed model}

We propose to extend the rank $R$ CPD introduced in (\ref{eq:PARAFAC}) to functional tensors in $\mathscr{H}$ by further multiplying each rank-1 sub-tensor $\bold{a}_{1r} \circ \dots \circ \bold{a}_{Dr}$ by a functional term $\phi_{r} \in L^2(\mathcal{I})$,
\begin{equation}
\label{eq:model:lowrankf}
     \sum_{r=1}^R \phi_{r}(t) (\bold{a}_{1r} \circ \dots \circ \bold{a}_{Dr}) = \llbracket \boldsymbol{\Phi}(t);\bold{A}_1; \dots; \bold{A}_D \rrbracket \qquad t \in \mathcal{I}.
\end{equation}
denoting $\boldsymbol{\Phi} = (\phi_{1}, \dots, \phi_{r}) \in L^2(\mathcal{I}){}^R$. We propose to name this decomposition the functional PARAFAC (F-PARAFAC) decomposition. 

As discussed earlier, it is common for a functional tensor to have a sample mode, i.e., a mode on which sub-tensors can be seen as samples drawn from an underlying tensor distribution. To account for this inherent randomness, \cite{Lock2018a} introduced a probabilistic modeling in the CPD by considering sample mode vectors to be sampled from a random distribution. Similarly, we propose to account for the inherent randomness of random functional tensors in $\mathscr{H}$ by multiplying each rank-$1$ functional tensor $\phi_{r}(t) (\bold{a}_{1r} \circ \dots \circ \bold{a}_{Dr})$ by a random variable $u_r$,
\begin{equation}
\label{eq:model:lowrankfu}
    \sum_{r=1}^R u_r \phi_{r}(t) (\bold{a}_{1r} \circ \dots \circ \bold{a}_{Dr}) = \llbracket \bold{U}; \boldsymbol{\Phi}(t);\bold{A}_1; \dots; \bold{A}_D \rrbracket \qquad t \in \mathcal{I}.
\end{equation}
denoting $\bold{U} = (u_1, \dots, u_R) \in \mathbb{R}^{1\times R}$. We propose naming this model the Latent Functional PARAFAC (LF-PARAFAC) decomposition. Assumptions on the distribution of $\bold{U}$ are further discussed in Section \ref{section:estimation}. Moreover, we adopt the following notations,  $\bold{A}_{(D)} = (\bold{A}_D \odot \dots \odot \bold{A}_{1})$, $\bold{A}_{(-d)} = (\bold{A}_D \odot \dots \odot \bold{A}_{d+1} \odot \bold{A}_{d-1} \odot \dots \odot \bold{A}_{1})$, and $\bold{A}_{(-d)}^{\boldsymbol{\Phi}}(t) = \bold{A}_{(-d)} \odot \boldsymbol{\Phi}(t)$. 

Given a random functional tensor $\mathcal{X}$, which we assume to be centered through this section, we are interested in finding the best LF-PARAFAC decomposition approximating $\mathcal{X}$:
\begin{equation}
    \label{eq:model:optimization}
    (\boldsymbol{\Psi}^*, \boldsymbol{\Phi}^*, \bold{A}_1^*, \dots, \bold{A}_D^*) = \argmin_{\boldsymbol{\Psi}, \boldsymbol{\Phi}, \bold{A}_1, \dots, \bold{A}_D}  \mathbb{E} [\| \mathcal{X} - \llbracket \boldsymbol{\Psi}(\mathcal{X}); \boldsymbol{\Phi}; \bold{A}_1; \dots; \bold{A}_D \rrbracket\|^2]
\end{equation}
In this optimization problem, $\boldsymbol{\Psi}^*$ is the operator $\mathcal{X} \mapsto \boldsymbol{\Psi}^*(\mathcal{X}) = \bold{U}^*$ outputting the optimal sample mode vector for any given random functional tensor $\mathcal{X}$. The asymmetry between $\bold{U}$, which is random, and $\boldsymbol{\Phi}, \bold{A}_1, \dots, \bold{A}_D$, which are fixed, allows us to rewrite the optimization problem as
\begin{equation}
    \label{eq:model:optimizationZ}
    \min_{\boldsymbol{\Phi}, \bold{A}_1, \dots, \bold{A}_D}  \mathbb{E}[\, \min_{\bold{U}}\| \mathcal{X} - \llbracket \bold{U}; \boldsymbol{\Phi}; \bold{A}_1; \dots; \bold{A}_D \rrbracket\|^2]
\end{equation}
Proposition \ref{proposition:psi} gives the expression of $\Psi^*$, corresponding to the optimal matrix $\bold{U}$ of the optimization sub-problem on the right-hand side of (\ref{eq:model:optimizationZ}). 
\begin{proposition}
    \label{proposition:psi}
    Fixing $\boldsymbol{\Phi}, \bold{A}_1, \dots, \bold{A}_D$, for any $\mathcal{X}$, the optimal operator $\boldsymbol{\Psi}^*$ is given by
    \begin{equation}
    \label{eq:solz}
        \boldsymbol{\Psi}^*(\mathcal{X}) = \left ( \int \bold{x}(t)^T (\bold{A}_{(D)} \odot \boldsymbol{\Phi}(t)) \text{d} t \right ) \left ( \int (\bold{A}_{(D)} \odot \boldsymbol{\Phi}(t))^T (\bold{A}_{(D)} \odot \boldsymbol{\Phi}(t)) \text{d} t \right )^{-1}
    \end{equation}
\end{proposition}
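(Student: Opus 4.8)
The plan is to reduce the functional least-squares problem to a finite-dimensional quadratic program in $\bold{U}$ by vectorizing the tensor slice by slice, and then solve it via the normal equations. Throughout, fix a realization $\mathcal{X}$ and fix $\boldsymbol{\Phi},\bold{A}_1,\dots,\bold{A}_D$, so that we are solving the inner minimization on the right-hand side of (\ref{eq:model:optimizationZ}).

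First I would write the mode-$1$ vectorization of the reconstructed functional tensor. Using the standard identity expressing the matricization of a CP tensor through Khatri--Rao products (the same identity underlying the ALS updates for (\ref{eq:PARAFAC:optim})), and treating $\boldsymbol{\Phi}(t)=(\phi_1(t),\dots,\phi_R(t))\in\mathbb{R}^{1\times R}$, the mode-$1$ vectorization of $\llbracket \bold{U};\boldsymbol{\Phi}(t);\bold{A}_1;\dots;\bold{A}_D\rrbracket$ at each $t\in\mathcal{I}$ equals $(\bold{A}_{(D)}\odot\boldsymbol{\Phi}(t))\,\bold{U}^\top$, because column $r$ of $\bold{A}_{(D)}\odot\boldsymbol{\Phi}(t)$ is $\phi_r(t)\,(\bold{a}_{Dr}\otimes\dots\otimes\bold{a}_{1r})$ and the weights $u_r$ are collected in $\bold{U}^\top$. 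Applying the definition of the norm on $\mathscr{H}$ induced by (\ref{eq:innerprod}) — namely $\|\mathcal{F}\|^2=\sum_{\bold{j}}\int_{\mathcal{I}} f_{\bold{j}}(t)^2\,\mathrm{d}t=\int_{\mathcal{I}}\|\bold{f}(t)\|_2^2\,\mathrm{d}t$, the interchange of the finite sum and the integral being immediate — the objective becomes
\[
\big\| \mathcal{X} - \llbracket \bold{U};\boldsymbol{\Phi};\bold{A}_1;\dots;\bold{A}_D\rrbracket \big\|^2 = \int_{\mathcal{I}} \big\| \bold{x}(t) - (\bold{A}_{(D)}\odot\boldsymbol{\Phi}(t))\,\bold{U}^\top \big\|_2^2\,\mathrm{d}t .
\]

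Next I would expand the square and pull the $t$-independent vector $\bold{U}$ out of the integral, obtaining $c - 2\,\bold{U}\,\bold{m} + \bold{U}\,\bold{M}\,\bold{U}^\top$, where $\bold{m}=\int_{\mathcal{I}} (\bold{A}_{(D)}\odot\boldsymbol{\Phi}(t))^\top\bold{x}(t)\,\mathrm{d}t$, $\bold{M}=\int_{\mathcal{I}} (\bold{A}_{(D)}\odot\boldsymbol{\Phi}(t))^\top(\bold{A}_{(D)}\odot\boldsymbol{\Phi}(t))\,\mathrm{d}t$, and $c$ does not depend on $\bold{U}$; all three integrals are finite since the entries of $\mathcal{X}$ lie in $L^2(\mathcal{I})$ and $\boldsymbol{\Phi}\in L^2(\mathcal{I})^R$ (Cauchy--Schwarz). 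Since $\bold{M}$ is symmetric positive semidefinite the objective is convex in $\bold{U}$, and under the invertibility of $\bold{M}$ implicit in (\ref{eq:solz}) it is strictly convex with unique minimizer given by the normal equations $\bold{M}\,(\bold{U}^*)^\top=\bold{m}$, i.e. $(\bold{U}^*)^\top=\bold{M}^{-1}\bold{m}$. Transposing and using the symmetry of $\bold{M}$ gives exactly (\ref{eq:solz}). Because this holds for every realization, the map $\mathcal{X}\mapsto\bold{U}^*$ is precisely the operator $\boldsymbol{\Psi}^*$.

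I expect the only delicate point to be bookkeeping: checking that the Khatri--Rao ordering and the vectorization convention agree with the paper's definitions of $\bold{A}_{(D)}$, $\bold{x}_{(1)}$ and $\llbracket\cdot\rrbracket$, and stating clearly the implicit regularity and invertibility hypotheses ensuring $\bold{M}$ and $\bold{M}^{-1}$ are well defined. The analytic content — interchanging the finite sum with the integral and minimizing a finite-dimensional quadratic form — is routine.
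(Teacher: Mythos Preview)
Your proposal is correct and follows essentially the same route as the paper: rewrite the functional norm as $\int \|\bold{x}(t)-(\bold{A}_{(D)}\odot\boldsymbol{\Phi}(t))\bold{U}^\top\|_2^2\,\mathrm{d}t$, then solve the resulting finite-dimensional quadratic in $\bold{U}$ via the normal equations. Your write-up is in fact slightly more careful than the paper's (you justify integrability via Cauchy--Schwarz and invoke convexity for uniqueness), but the underlying argument is identical.
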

Plugging the optimal operator $\Psi^*$ in model (\ref{eq:model:lowrankfu}) eliminates the dependency on $\bold{U}$, and the problem can now be seen as depending only on $\boldsymbol{\Phi}, \bold{A}_1, \dots, \bold{A}_D$:
\begin{equation}
    \label{eq:model:optimization2}
    (\boldsymbol{\Phi}^*, \bold{A}_1^*, \dots, \bold{A}_D^*) = \argmin_{\boldsymbol{\Phi}, \bold{A}_1, \dots, \bold{A}_D}  \mathbb{E} [\| \mathcal{X} - \llbracket \bold{U}^*; \boldsymbol{\Phi}; \bold{A}_1; \dots; \bold{A}_D \rrbracket\|^2]
\end{equation}
Denoting collapsed functional covariance matrices $\boldsymbol{\Sigma}_{[f]}(s, t) = \mathbb{E} [\bold{x}(s) (\bold{I}_{p_{(d)}} \otimes \bold{x}(t)^T)]$ and $\boldsymbol{\Sigma}_{[d]}(s, t) = \mathbb{E} [\bold{X}_{(d)}(s) (\bold{I}_{p_{(-d)}} \otimes \bold{x}(t)^T)]$, for which an interpretation of $\boldsymbol{\Sigma}_{[f]}(s, t)$ and $\boldsymbol{\Sigma}_{[d]}(s, t)$ is given in section \ref{subsec:covarianceestimation}, we can derive Corollary \ref{corollary:cov} from Proposition \ref{proposition:psi}.
\begin{corollary}
\label{corollary:cov}
Denoting $\boldsymbol{\Lambda}^* = \mathbb{E}[{\bold{U}^*}^T \bold{U}^*]$ the covariance matrix of $\bold{U}^*$ (note that $\boldsymbol{\Lambda}^* \in \mathbb{R}^{R\times R}$ since $\bold{U}^* \in \mathbb{R}^{1\times R}$), and $\bold{K}(t) =  (\bold{A}_{(D)} \odot \boldsymbol{\Phi}(t)) \left ( \int (\bold{A}_{(D)} \odot \boldsymbol{\Phi}(t))^T (\bold{A}_{(D)} \odot \boldsymbol{\Phi}(t)) \right )^{-1}$, we have
\begin{equation}
        \boldsymbol{\Lambda}^* = \int \int \bold{K}(s)^\top \boldsymbol{\Sigma}_{[f]}(s, t) \bold{K}(t) \text{d} s \text{d} t
    \end{equation}
\end{corollary}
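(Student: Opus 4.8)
The plan is to substitute the closed form of $\bold{U}^{*}=\boldsymbol{\Psi}^{*}(\mathcal{X})$ supplied by Proposition \ref{proposition:psi} into $\boldsymbol{\Lambda}^{*}=\mathbb{E}[{\bold{U}^{*}}^{T}\bold{U}^{*}]$ and then push the expectation through the integrals. Write $\bold{B}(t)=\bold{A}_{(D)}\odot\boldsymbol{\Phi}(t)$ and $\bold{M}=\int_{\mathcal{I}}\bold{B}(t)^{T}\bold{B}(t)\,\text{d}t$, so that $\bold{M}$ is a deterministic invertible $R\times R$ matrix and $\bold{K}(t)=\bold{B}(t)\bold{M}^{-1}$. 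Since $\bold{M}^{-1}$ does not depend on $t$, Proposition \ref{proposition:psi} can be rewritten as
\[
\bold{U}^{*}\;=\;\left(\int_{\mathcal{I}}\bold{x}(t)^{T}\bold{B}(t)\,\text{d}t\right)\bold{M}^{-1}\;=\;\int_{\mathcal{I}}\bold{x}(t)^{T}\bold{B}(t)\bold{M}^{-1}\,\text{d}t\;=\;\int_{\mathcal{I}}\bold{x}(t)^{T}\bold{K}(t)\,\text{d}t ,
\]
a $1\times R$ random row vector that is a linear functional of $\bold{x}$.

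Next I would transpose and multiply, using that $\bold{K}$ is deterministic so the two single integrals merge into a double integral:
\[
{\bold{U}^{*}}^{T}\bold{U}^{*}\;=\;\left(\int_{\mathcal{I}}\bold{K}(s)^{T}\bold{x}(s)\,\text{d}s\right)\left(\int_{\mathcal{I}}\bold{x}(t)^{T}\bold{K}(t)\,\text{d}t\right)\;=\;\int_{\mathcal{I}}\!\!\int_{\mathcal{I}}\bold{K}(s)^{T}\,\bold{x}(s)\bold{x}(t)^{T}\,\bold{K}(t)\,\text{d}s\,\text{d}t .
\]
Taking expectations and interchanging $\mathbb{E}$ with the double integral over $\mathcal{I}\times\mathcal{I}$ --- the one genuine analytic step --- is justified by Fubini's theorem, since $\bold{K}$ is bounded on the compact $\mathcal{I}$ and $\int_{\mathcal{I}}\mathbb{E}\|\bold{x}(t)\|^{2}\,\text{d}t=\mathbb{E}\|\mathcal{X}\|^{2}<\infty$ under the standing square-integrability assumption on $\mathcal{X}$. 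This gives $\boldsymbol{\Lambda}^{*}=\int_{\mathcal{I}}\!\!\int_{\mathcal{I}}\bold{K}(s)^{T}\,\mathbb{E}[\bold{x}(s)\bold{x}(t)^{T}]\,\bold{K}(t)\,\text{d}s\,\text{d}t$.

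It then only remains to identify $\mathbb{E}[\bold{x}(s)\bold{x}(t)^{T}]$ with the collapsed functional covariance $\boldsymbol{\Sigma}_{[f]}(s,t)$: the collapsed functional mode carries dimension one, so the Kronecker factor in the definition of $\boldsymbol{\Sigma}_{[f]}$ is the $1\times1$ identity and $\boldsymbol{\Sigma}_{[f]}(s,t)=\mathbb{E}[\bold{x}(s)\bold{x}(t)^{T}]$; moreover, since $\mathcal{X}$ is centered throughout this section and $\bold{U}^{*}$ is linear in $\bold{x}$, one has $\mathbb{E}[\bold{U}^{*}]=0$, so this object is genuinely a (cross-)covariance and $\boldsymbol{\Lambda}^{*}$ the covariance matrix of $\bold{U}^{*}$. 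Substituting yields $\boldsymbol{\Lambda}^{*}=\int_{\mathcal{I}}\!\!\int_{\mathcal{I}}\bold{K}(s)^{T}\boldsymbol{\Sigma}_{[f]}(s,t)\bold{K}(t)\,\text{d}s\,\text{d}t$, which is exactly the claim. I do not expect any real difficulty here: the corollary is a direct bilinear-form rearrangement of Proposition \ref{proposition:psi}, the only point meriting care being the Fubini interchange above, which is routine given the $L^{2}$ hypothesis on $\mathcal{X}$ and the compactness of $\mathcal{I}$ (note in particular that the deterministic matrix $\bold{M}^{-1}$ may be moved freely in and out of the integrals, so no symmetry of $\bold{M}$ is even needed).
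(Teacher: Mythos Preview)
Your proof is correct and follows essentially the same approach as the paper's: substitute the expression for $\bold{U}^{*}$ from Proposition~\ref{proposition:psi}, expand the product into a double integral, swap expectation and integration, and identify $\mathbb{E}[\bold{x}(s)\bold{x}(t)^{T}]$ with $\boldsymbol{\Sigma}_{[f]}(s,t)$. You are simply more explicit than the paper about the Fubini justification and about why the Kronecker factor in the definition of $\boldsymbol{\Sigma}_{[f]}$ collapses to the identity, but the argument is the same.
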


\begin{lemma}
\label{lemma:1}
    The objective function of optimization problem (\ref{eq:model:optimization2}) only depends on covariance $\boldsymbol{\Lambda}^*$ and other parameters via an operator $C$:
    \begin{equation*}
    \label{eq:optimizationC}
        (\boldsymbol{\Phi}^*, \bold{A}_1^*, \dots, \bold{A}_D^*) = \argmin_{\boldsymbol{\Phi}, \bold{A}_1, \dots, \bold{A}_D} C(\boldsymbol{\Lambda}^*, \boldsymbol{\Phi}, \bold{A}_1, \dots, \bold{A}_D)
    \end{equation*}
\end{lemma}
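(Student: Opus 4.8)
The plan is to expand the objective of (\ref{eq:model:optimization2}) and show that, after plugging in the optimal $\bold{U}^*$ from Proposition \ref{proposition:psi}, every term that survives the expectation can be written using only the deterministic parameters $\boldsymbol{\Phi},\bold{A}_1,\dots,\bold{A}_D$ together with the second-moment matrix $\boldsymbol{\Lambda}^*=\mathbb{E}[{\bold{U}^*}^\top\bold{U}^*]$. First I would vectorize along mode $1$: writing $\bold{x}(t)$ for the mode-$1$ vectorization of $\mathcal{X}(t)$ and using the standard identity for the CP model, $\llbracket \bold{U};\boldsymbol{\Phi}(t);\bold{A}_1;\dots;\bold{A}_D\rrbracket$ vectorizes to $(\bold{A}_{(D)}\odot\boldsymbol{\Phi}(t))\bold{U}^\top$, so that
\begin{equation*}
\|\mathcal{X}-\llbracket \bold{U}^*;\boldsymbol{\Phi};\bold{A}_1;\dots;\bold{A}_D\rrbracket\|^2
=\int_{\mathcal{I}}\big\|\bold{x}(t)-(\bold{A}_{(D)}\odot\boldsymbol{\Phi}(t))\,{\bold{U}^*}^\top\big\|_2^2\,\mathrm{d}t .
\end{equation*}
Expanding the squared norm gives three pieces: $\int\|\bold{x}(t)\|_2^2\,\mathrm{d}t$, a cross term $-2\int \bold{x}(t)^\top(\bold{A}_{(D)}\odot\boldsymbol{\Phi}(t))\,{\bold{U}^*}^\top\,\mathrm{d}t$, and the quadratic term $\int {\bold{U}^*}(\bold{A}_{(D)}\odot\boldsymbol{\Phi}(t))^\top(\bold{A}_{(D)}\odot\boldsymbol{\Phi}(t))\,{\bold{U}^*}^\top\,\mathrm{d}t$.

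Next I would take expectations term by term. The first term, $\mathbb{E}\!\int\|\bold{x}(t)\|_2^2\,\mathrm{d}t$, is a constant independent of all optimization variables, so it plays no role in the $\argmin$. For the quadratic term, using $\bold{K}(t)$ and $\boldsymbol{\Lambda}^*$ from Corollary \ref{corollary:cov} and the cyclic property of the trace, one rewrites ${\bold{U}^*}\bold{M}(t){\bold{U}^*}^\top=\tr(\bold{M}(t){\bold{U}^*}^\top{\bold{U}^*})$ where $\bold{M}(t)=(\bold{A}_{(D)}\odot\boldsymbol{\Phi}(t))^\top(\bold{A}_{(D)}\odot\boldsymbol{\Phi}(t))$; pushing the expectation inside yields $\int \tr(\bold{M}(t)\boldsymbol{\Lambda}^*)\,\mathrm{d}t$, which depends only on $\boldsymbol{\Lambda}^*,\boldsymbol{\Phi},\bold{A}_1,\dots,\bold{A}_D$. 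The crucial step is the cross term: substituting the closed form of ${\bold{U}^*}^\top=\Psi^*(\mathcal{X})^\top$ from (\ref{eq:solz}) replaces $\bold{x}(s)$ (hidden inside ${\bold{U}^*}$) by a second occurrence of the process, so after taking the expectation the term becomes an integral against $\boldsymbol{\Sigma}_{[f]}(s,t)=\mathbb{E}[\bold{x}(s)(\bold{I}\otimes\bold{x}(t)^\top)]$; one then recognizes, via Corollary \ref{corollary:cov}, that the resulting double integral is exactly $-2\int\tr(\bold{M}(t)\boldsymbol{\Lambda}^*)\,\mathrm{d}t$ up to the same trace manipulation, so the cross term collapses into something expressible through $\boldsymbol{\Lambda}^*$ as well (indeed, by optimality of $\bold{U}^*$ the cross and quadratic terms combine to $-\int\tr(\bold{M}(t)\boldsymbol{\Lambda}^*)\,\mathrm{d}t$).

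Collecting the three contributions, the objective equals (constant) $-\int\tr\!\big((\bold{A}_{(D)}\odot\boldsymbol{\Phi}(t))^\top(\bold{A}_{(D)}\odot\boldsymbol{\Phi}(t))\,\boldsymbol{\Lambda}^*\big)\,\mathrm{d}t$, and I would simply \emph{define} the operator $C$ to be this expression (dropping the irrelevant additive constant), which manifestly depends on its arguments only as claimed; hence the $\argmin$ in (\ref{eq:model:optimization2}) coincides with the $\argmin$ of $C(\boldsymbol{\Lambda}^*,\boldsymbol{\Phi},\bold{A}_1,\dots,\bold{A}_D)$. The main obstacle is the cross-term bookkeeping: one must be careful that $\bold{U}^*$ is itself a (linear) functional of $\mathcal{X}$, so the cross term is genuinely a second moment of the process and not a product of a moment with a deterministic vector — the passage of the expectation through the integral over $\mathcal{I}$ and the identification of the resulting kernel with $\boldsymbol{\Sigma}_{[f]}$ (and of the whole expression with a trace against $\boldsymbol{\Lambda}^*$) is where all the care is needed; once that is done, the matricization identities and the constancy of $\mathbb{E}\|\mathcal{X}\|^2$ make the rest routine.
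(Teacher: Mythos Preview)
Your expansion and the use of the normal equations to combine the cross and quadratic terms are correct: the objective of (\ref{eq:model:optimization2}) indeed equals, up to the constant $\mathbb{E}\!\int\|\bold{x}(t)\|^2\mathrm{d}t$, the quantity $-\int\tr\!\big((\bold{A}_{(D)}\odot\boldsymbol{\Phi}(t))^\top(\bold{A}_{(D)}\odot\boldsymbol{\Phi}(t))\,\boldsymbol{\Lambda}^*\big)\mathrm{d}t$, so your argument establishes Lemma~\ref{lemma:1} as literally stated.

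However, the operator $C$ you produce is not the one the paper derives, and the difference matters for everything that follows. The paper does \emph{not} collapse the cross and quadratic pieces: in its proof the quadratic term is rewritten via $\boldsymbol{\Lambda}^*$ as $\int\boldsymbol{\Phi}(t)(\bold{A}_{(D)}^\top\bold{A}_{(D)}\ast\boldsymbol{\Lambda}^*)\boldsymbol{\Phi}(t)^\top\mathrm{d}t$, while the cross term is kept as an explicit double integral against the data covariance $\boldsymbol{\Sigma}_{[f]}$, namely $-2\int\!\!\int\boldsymbol{\Sigma}_{[f]}(s,t)(\bold{A}_{(D)}\odot\bold{K}(s))\boldsymbol{\Phi}(t)^\top\mathrm{d}s\,\mathrm{d}t$. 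Your identity (cross $=-2\times$ quadratic) is precisely what links the two forms, but the paper deliberately leaves it un-collapsed. The reason is the block relaxation of Algorithm~\ref{algo:1} and Proposition~\ref{proposition:solutions}: there $\boldsymbol{\Lambda}^*$ (and $\bold{K}$) are held fixed while one minimizes over $\boldsymbol{\Phi}$ or $\bold{A}_d$. With your $C(\boldsymbol{\Lambda}^*,\boldsymbol{\Phi},\bold{A}_1,\dots,\bold{A}_D)=-\int\tr(\bold{M}(t)\boldsymbol{\Lambda}^*)\mathrm{d}t$ and $\boldsymbol{\Lambda}^*$ frozen, each such subproblem is unbounded below (scale any $\bold{A}_d$ or $\boldsymbol{\Phi}$), so no closed-form update of the type (\ref{eq:solf})--(\ref{eq:sold}) can come out of it. The paper's two-term $C$ retains the data-driven linear part that balances the quadratic, which is exactly what makes the alternating least-squares steps well posed. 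So your route is a cleaner proof of the lemma in isolation, but the paper's route is the one that feeds the rest of Section~\ref{section:model}.
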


For a given tensor, the operator $C$ depends for a given tensor $\mathcal{X}$ on the collapsed covariance $\boldsymbol{\Sigma}_{[f]}$. This dependency can be replaced by a dependency on $\boldsymbol{\Sigma}_{[d]}$ for any $d \in [D]$, which gives multiple expression for $C$, introduced in Appendix \ref{sec:annex:proofs}. Expression are given in Appendix \ref{sec:annex:exp}. Lemma \ref{lemma:1} allows to derive simple solutions to (\ref{eq:optimizationC}), which are introduced in Proposition \ref{proposition:solutions}.

\begin{proposition}
\label{proposition:solutions}
    For the functional mode, we have
    \begin{equation}
        \label{eq:solf}
        \boldsymbol{\Phi}^*(s) =  \left ( \int \boldsymbol{\Sigma}_{[f]}(s, t) (\bold{A}_{(D)} \odot \bold{K}(t)) \text{d} t \right ) \left ( (\bold{A}_{(D)}^T \bold{A}_{(D)}) \ast \boldsymbol{\Lambda}^* \right )^{-1}
    \end{equation}
    For any $d\in [D]$ tabular mode we have,
    \begin{equation}
    \begin{split}
        \label{eq:sold}
        \bold{A}_d^* = \left ( \int \int \boldsymbol{\Sigma}_{[d]}(s, t) (\bold{A}_{(-d)}^{\boldsymbol{\Phi}} (s) \odot \bold{K}(t)) \text{d} s  \text{d} t  \right ) \left ( \int \bold{A}_{(-d)}^{\boldsymbol{\Phi}}(t)^T \bold{A}_{(-d)}^{\boldsymbol{\Phi}} (t) \text{d} t \ast \boldsymbol{\Lambda}^* \right )^{-1}
    \end{split}
    \end{equation}
\end{proposition}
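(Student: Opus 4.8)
The plan is to exploit that, once $C$ is written out explicitly (as in the expansion carried out in the proof of Lemma \ref{lemma:1}), it is a strictly convex quadratic in each of the blocks $\boldsymbol{\Phi}$ and $\bold{A}_d$ separately, the remaining factors together with $\boldsymbol{\Lambda}^*$ and $\bold{K}$ being held fixed at the current iterate. Hence each update in Proposition \ref{proposition:solutions} is just the unique stationary point of that quadratic, and the whole argument reduces to (i) writing down the relevant quadratic, (ii) setting its derivative to zero, and (iii) rearranging the resulting normal equation into the stated closed form using the Khatri–Rao and Hadamard identities together with the definitions of $\boldsymbol{\Sigma}_{[f]}$ and $\boldsymbol{\Sigma}_{[d]}$. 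Throughout I would use the identity obtained by combining Proposition \ref{proposition:psi} with the definition of $\bold{K}$ in Corollary \ref{corollary:cov}, namely that the optimal sample-mode vector satisfies $\bold{U}^{*\top}=\int \bold{K}(t)^\top \bold{x}(t)\,\mathrm{d}t$, so that both $\boldsymbol{\Lambda}^*=\mathbb{E}[\bold{U}^{*\top}\bold{U}^*]$ and the cross second moment $\mathbb{E}[\bold{U}^{*\top}\bold{x}(t)^\top]$ are expressible through the collapsed covariances and $\bold{K}$.

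\emph{Functional mode.} Fixing $\bold{A}_1,\dots,\bold{A}_D$, I would regard $C$ as a functional on the Hilbert space $L^2(\mathcal{I})^{R}$. Expanding $\|\mathcal{X}-\llbracket\bold{U}^*;\boldsymbol{\Phi};\bold{A}_1;\dots;\bold{A}_D\rrbracket\|^2$ with the inner product (\ref{eq:innerprod}) and taking expectations leaves, up to a constant, a linear term $-2\int \boldsymbol{\Phi}(t)\,\bold{c}(t)^\top\,\mathrm{d}t$ and a quadratic term $\int \boldsymbol{\Phi}(t)\big((\bold{A}_{(D)}^\top\bold{A}_{(D)})\ast\boldsymbol{\Lambda}^*\big)\boldsymbol{\Phi}(t)^\top\,\mathrm{d}t$: the Hadamard factor comes from $(\bold{A}_{(D)}\odot\boldsymbol{\Phi})^\top(\bold{A}_{(D)}\odot\boldsymbol{\Phi})=(\bold{A}_{(D)}^\top\bold{A}_{(D)})\ast(\boldsymbol{\Phi}^\top\boldsymbol{\Phi})$ together with $\mathbb{E}[\bold{U}^{*\top}\bold{U}^*]=\boldsymbol{\Lambda}^*$, and substituting $\bold{U}^{*\top}=\int\bold{K}^\top\bold{x}$ into the cross term identifies $\bold{c}(t)=\int\boldsymbol{\Sigma}_{[f]}(t,s)(\bold{A}_{(D)}\odot\bold{K}(s))\,\mathrm{d}s$. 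Taking the Gâteaux derivative in an arbitrary direction $\bold{H}\in L^2(\mathcal{I})^{R}$ and invoking the fundamental lemma of the calculus of variations gives the stationarity condition pointwise in $s$, namely $\boldsymbol{\Phi}^*(s)\big((\bold{A}_{(D)}^\top\bold{A}_{(D)})\ast\boldsymbol{\Lambda}^*\big)=\int\boldsymbol{\Sigma}_{[f]}(s,t)(\bold{A}_{(D)}\odot\bold{K}(t))\,\mathrm{d}t$; since $(\bold{A}_{(D)}^\top\bold{A}_{(D)})\ast\boldsymbol{\Lambda}^*$ is a Schur product of positive semidefinite matrices and is nonsingular under the standing rank assumptions, inverting it yields (\ref{eq:solf}).

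\emph{Tabular modes.} For $d\in[D]$, I would instead pass to the mode-$d$ matricization, in which the model unfolds as $\bold{A}_d\,\diag(\bold{U}^*)\,\bold{A}_{(-d)}^{\boldsymbol{\Phi}}(t)^\top$, so that with everything but $\bold{A}_d$ fixed, $C$ equals, up to a constant, $\mathbb{E}\big[\int\|\bold{X}_{(d)}(t)-\bold{A}_d\,\diag(\bold{U}^*)\,\bold{A}_{(-d)}^{\boldsymbol{\Phi}}(t)^\top\|_F^2\,\mathrm{d}t\big]$, a convex quadratic in $\bold{A}_d$. Setting the matrix gradient to zero gives $\bold{A}_d^*\,\mathbb{E}\big[\int\diag(\bold{U}^*)\bold{A}_{(-d)}^{\boldsymbol{\Phi}}(t)^\top\bold{A}_{(-d)}^{\boldsymbol{\Phi}}(t)\diag(\bold{U}^*)\,\mathrm{d}t\big]=\mathbb{E}\big[\int\bold{X}_{(d)}(t)\bold{A}_{(-d)}^{\boldsymbol{\Phi}}(t)\diag(\bold{U}^*)\,\mathrm{d}t\big]$. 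The left-hand factor collapses, via $\mathbb{E}[\diag(\bold{U}^*)\bold{M}\diag(\bold{U}^*)]=\bold{M}\ast\boldsymbol{\Lambda}^*$, to $\big(\int\bold{A}_{(-d)}^{\boldsymbol{\Phi}}(t)^\top\bold{A}_{(-d)}^{\boldsymbol{\Phi}}(t)\,\mathrm{d}t\big)\ast\boldsymbol{\Lambda}^*$; substituting $\diag(\bold{U}^*)=\int\diag(\bold{x}(t)^\top\bold{K}(t))\,\mathrm{d}t$ in the right-hand side and recognising the collapsed cross-covariance $\boldsymbol{\Sigma}_{[d]}$ turns it into $\int\int\boldsymbol{\Sigma}_{[d]}(s,t)(\bold{A}_{(-d)}^{\boldsymbol{\Phi}}(s)\odot\bold{K}(t))\,\mathrm{d}s\,\mathrm{d}t$. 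Inverting the (again nonsingular) Hadamard factor yields (\ref{eq:sold}).

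\emph{Main obstacle.} The genuinely delicate points are bookkeeping and justification: one must hold $\bold{U}^*$ (equivalently $\bold{K}$ and $\boldsymbol{\Lambda}^*$) at its current value rather than differentiating through it, justify exchanging $\mathbb{E}$, $\int_{\mathcal{I}}$ and $\tr$ (routine once $\mathbb{E}\int\|\bold{x}(t)\|^2\,\mathrm{d}t<\infty$), and, for the functional mode, the a.e.\ pointwise reduction of the variational equation and the sufficiency of the first-order condition by convexity. I expect the real work — and the easiest place to slip — to be matching the index orderings in the Kronecker and Khatri–Rao products and in the mode-$d$ matricizations so that the reshaped second moments align exactly with the definitions of $\boldsymbol{\Sigma}_{[f]}$ and $\boldsymbol{\Sigma}_{[d]}$ given before Corollary \ref{corollary:cov}; everything else is a routine convex-quadratic stationarity computation.
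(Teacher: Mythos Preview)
Your proposal is correct and follows essentially the same route as the paper: write $C$ (via the expansion of Lemma~\ref{lemma:1}) as a quadratic in $\boldsymbol{\Phi}$, respectively in $\bold{A}_d$, with $\boldsymbol{\Lambda}^*$ and $\bold{K}$ held fixed, take the G\^ateaux (resp.\ matrix) derivative, set it to zero, and invert the Hadamard factor. The paper packages the functional-mode derivative computation as a separate lemma (Lemma~\ref{lemma:fgateaudiff}) and, for the tabular mode, uses the alternative closed form of $C$ in terms of $\boldsymbol{\Sigma}_{[d]}$ already derived at the end of the proof of Lemma~\ref{lemma:1} rather than re-deriving it from the matricization as you do, but the substance and the key identities are the same.
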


\subsection{Identifiability}
\label{subsec:identifiability}

A well-known advantage of the native CPD is the identifiability of model parameters under mild assumptions. In the probabilistic setting, we define identifiability as the fact that if for two sets of model parameters $\boldsymbol{\Theta} = (\boldsymbol{\Lambda}, \boldsymbol{\Phi}, \bold{A}_1, \dots, \bold{A}_D)$ and $\tilde{\boldsymbol{\Theta}} = (\tilde{\boldsymbol{\Lambda}}, \tilde{\bold{A}}_f, \tilde{\bold{A}}_1, \dots, \tilde{\bold{A}}_D)$ we have, $L( \boldsymbol{\Theta}| \mathcal{X} ) = L( \tilde{\boldsymbol{\Theta}} | \mathcal{X} )$, for any data $\mathcal{X}$, where $L( \boldsymbol{\Theta}| \mathcal{X} )$ stands for the model likelihood, then $\boldsymbol{\Theta} = \tilde{\boldsymbol{\Theta}}$ (\cite{Lock2018a}). In this context, two well-known indeterminacies, preventing parameters from being identifiable, can be noted. The first is the scaling indeterminacy, which stems from the fact that $\llbracket \bold{U}; \boldsymbol{\Phi}; \bold{A}_1; \dots; \bold{A}_D \rrbracket = \llbracket \tilde{\bold{U}}; \tilde{\bold{A}}_f; \tilde{\bold{A}}_1; \dots; \tilde{\bold{A}}_D \rrbracket$ for any $\{ \tilde{\bold{U}}, \tilde{\bold{A}}_f, \tilde{\bold{A}}_1, \dots, \tilde{\bold{A}}_D \}= \{ a_{\bold{U}}\bold{U}, a_{f}\boldsymbol{\Phi}, a_{1}\bold{A}_1, \dots, a_{D}\bold{A}_D\}$ with $a_{\bold{U}} a_f a_1 \dots a_D = 1$. The second is the permutation/order indeterminacy, which comes from the fact that any reordering of the columns of the parameters matrices gives the same CPD tensor. To remove these two indeterminacies, we consider the following assumptions:
\begin{enumerate}[label=({C\arabic*})]
    \item \label{asp:C1} For any $d\in[D]$ and $r\in[R]$, $\| \bold{a}_{dr} \|_2 = 1$, and the first non-zero value of $\bold{a}_{dr}$ is positive. Similarly $\| \phi_{r} \|_{\mathcal{H}} = 1$ and $\phi_{r}$ is positive at the lower bound of its support.
    \item \label{asp:C2} Function $\phi_{r}$, and vectors $\bold{a}_{dr}$ are ordered such that sample-mode random variables $u_r$ have decreasing variance: $\boldsymbol{\Lambda}_{rr} < \boldsymbol{\Lambda}_{r'r'}$ for any $r < r'$ (we suppose no equal variance).
\end{enumerate}

Considering the scaling and order indeterminacies fixed, a simple condition on the k-rank of the feature matrices and $R$ was derived (\cite{Sidiropoulos2000}) for tensors of any order to guarantee identifiability. In the probabilistic setting, \cite{Lock2018a}, introduced a similar condition, which can be easily extended to our functional setting. Denoting $\krank(\bold{A}_d)$ the k-rank of the feature matrix $\bold{A}_d$, the maximum number $k$ such that any $k$ columns of $\bold{A}_d$ are linearly independent, Proposition \ref{proposition:identifiability} gives conditions for which model (\ref{eq:model:lowrankfu}) is identifiable.

\begin{proposition}
\label{proposition:identifiability}
Denoting $\mathscr{H}$ the space to which belong functions $\phi_r$, considering assumptions \ref{asp:C1} and  \ref{asp:C2} verified, model (\ref{eq:model:lowrankfu}) is identifiable if $\mathscr{H}$ is infinite dimensional and if the following inequality is verified:
\begin{equation}
    \sum_{d=1}^D \krank(\bold{A}_d) \geq D - 1
\end{equation}
Conversely assuming $\mathscr{H}$ is finite dimensional and denoting $\bold{C}$ any basis decomposition matrix for functions $\phi_r$, model (\ref{eq:model:lowrankfu}) is identifiable if deleting any row of $\bold{C} \odot \bold{A}_1 \odot \dots \odot \bold{A}_D$ results in a matrix having two distinct full-rank submatrices and if we have:
\begin{equation}
\krank(\bold{C}) + \sum_{d=1}^D \krank(\bold{A}_d) \geq R + (D - 1)
\end{equation}
\end{proposition}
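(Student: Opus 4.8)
The plan is to reduce the claim to a uniqueness statement for the covariance operator of $\mathcal{X}$ and then to import a Kruskal-type argument in the spirit of \cite{Lock2018a}, handling the functional mode through a basis expansion. First I would note that, because $\mathcal{X}$ is centered and (under the distributional assumptions placed on the model in Section \ref{section:estimation}) its law is pinned down by its second-order structure, two parameter sets $\boldsymbol{\Theta}=(\boldsymbol{\Lambda},\boldsymbol{\Phi},\bold{A}_1,\dots,\bold{A}_D)$ and $\tilde{\boldsymbol{\Theta}}$ produce the same likelihood for all data if and only if they induce the same covariance operator of $\mathcal{X}$. Using the vectorization identity $\bold{x}(t)=(\bold{A}_{(D)}\odot\boldsymbol{\Phi}(t))\,\bold{U}^\top$ already exploited in Proposition \ref{proposition:psi} and Corollary \ref{corollary:cov}, that operator is the kernel
\begin{equation*}
\boldsymbol{\Sigma}(s,t)=(\bold{A}_{(D)}\odot\boldsymbol{\Phi}(s))\,\boldsymbol{\Lambda}\,(\bold{A}_{(D)}\odot\boldsymbol{\Phi}(t))^\top ,
\end{equation*}
so it is enough to show that $\boldsymbol{\Sigma}$ (plus, in the presence of a residual term, the induced low-rank-plus-diagonal structure) determines $(\boldsymbol{\Lambda},\boldsymbol{\Phi},\bold{A}_1,\dots,\bold{A}_D)$ up to exactly the scaling and permutation ambiguities that \ref{asp:C1} and \ref{asp:C2} remove.

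The finite-dimensional case is the core. Fixing an orthonormal basis $\boldsymbol{\psi}$ of $\mathscr{H}$ and writing $\boldsymbol{\Phi}=\bold{C}^\top\boldsymbol{\psi}$ with $\bold{C}$ its coefficient matrix, the random functional tensor $\mathcal{X}$ becomes (isometrically) an order-$(D+1)$ random tensor with latent CP representation $\llbracket\bold{U};\bold{C};\bold{A}_1;\dots;\bold{A}_D\rrbracket$, and $\boldsymbol{\Sigma}$ turns into the finite matrix $\bold{W}\boldsymbol{\Lambda}\bold{W}^\top$ with $\bold{W}=\bold{C}\odot\bold{A}_1\odot\dots\odot\bold{A}_D$. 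This is precisely the probabilistic CP model of \cite{Lock2018a} on $D+1$ modes, and I would invoke their identifiability theorem: the hypothesis that deleting any row of $\bold{C}\odot\bold{A}_1\odot\dots\odot\bold{A}_D$ leaves a matrix containing two disjoint full-column-rank submatrices is the factor-analytic (Anderson--Rubin type) condition that forces $\bold{W}\boldsymbol{\Lambda}\bold{W}^\top$ to determine $\boldsymbol{\Lambda}$ together with the column norms of $\bold{W}$, while the k-rank inequality $\krank(\bold{C})+\sum_{d=1}^D\krank(\bold{A}_d)\geq R+(D-1)$ is the Kruskal-type bound (cf. \cite{Sidiropoulos2000}) that makes the residual CP factorization unique up to column permutation and scaling. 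Conditions \ref{asp:C1} and \ref{asp:C2} eliminate those last ambiguities, and pulling $\bold{C}$ back through $\boldsymbol{\psi}$ recovers $\boldsymbol{\Phi}$.

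The infinite-dimensional case then follows as a corollary of the finite-dimensional one. Given two parameter sets inducing the same $\boldsymbol{\Sigma}$, I would restrict attention to the finite-dimensional subspace $\mathscr{H}_0\subset\mathscr{H}$ spanned by all the $\phi_r$ and $\tilde{\phi}_r$; after a basis choice in $\mathscr{H}_0$ both models sit inside the finite-dimensional picture above. Because the $\phi_r$ can be taken linearly independent, the coefficient matrix satisfies $\krank(\bold{C})=R$, so the finite-dimensional inequality collapses to exactly $\sum_{d=1}^D\krank(\bold{A}_d)\geq D-1$; and because $\mathscr{H}$ is infinite-dimensional one effectively has arbitrarily many independent ``basis rows'', which is what makes the row-deletion / two-full-rank-submatrices condition automatic in this regime. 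Hence no hypothesis beyond the stated inequality is needed.

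The step I expect to be the main obstacle is making the first reduction airtight — identifying precisely which assumptions on the distribution of $\bold{U}$ (and on the residual, if present) guarantee that the likelihood is a function of $\boldsymbol{\Sigma}$ alone — together with the careful bookkeeping of the scaling indeterminacy: one must show that the unavoidable diagonal rescaling of the columns of $\bold{W}$ can be consistently absorbed into $\boldsymbol{\Lambda}$, that the ``two disjoint full-rank submatrices after row deletion'' condition is exactly what makes this absorption unique, and that \ref{asp:C1}--\ref{asp:C2} then select a single representative. Verifying that the infinite-dimensional regime genuinely subsumes the row condition, so it may legitimately be dropped from the hypotheses there, is the other delicate point.
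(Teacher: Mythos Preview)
Your finite-dimensional argument coincides with the paper's: expand $\boldsymbol{\Phi}$ in an orthonormal basis of $\mathscr{H}$, obtain an order-$(D{+}1)$ latent-factor CP model $\llbracket\bold{U};\bold{C};\bold{A}_1;\dots;\bold{A}_D\rrbracket$, and invoke the Anderson--Rubin row-deletion condition together with the Sidiropoulos--Kruskal k-rank bound. The paper packages this last step as a standalone lemma for the latent-factor CP model, but the content is identical to what you describe.

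The infinite-dimensional case is where your route diverges, and there the subspace-restriction idea has a genuine gap. You propose to pass to the finite-dimensional span $\mathscr{H}_0$ of all the $\phi_r,\tilde\phi_r$ and then reuse the finite-dimensional result. The difficulty is precisely the row-deletion condition you flag as delicate: once a basis of $\mathscr{H}_0$ is fixed, the coefficient matrix $\bold{C}$ has at most $\dim\mathscr{H}_0\leq 2R$ rows, and your appeal to ``arbitrarily many independent basis rows'' cannot be realized within this picture --- enlarging the ambient subspace beyond $\mathscr{H}_0$ only appends \emph{zero} rows to $\bold{C}$ (the $\phi_r$ have no component along the new directions), and zero rows contribute nothing to the two-full-rank-submatrices requirement. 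So infinite dimensionality of $\mathscr{H}$ does not, through this construction, force the row condition; in fact if $\dim\mathscr{H}_0\cdot\prod_dp_d\leq 2R$ the condition is outright impossible.

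The paper sidesteps this by \emph{discretizing} rather than projecting: it evaluates $\boldsymbol{\Phi}$ on a regular grid of $K$ points of $\mathcal{I}$, obtaining a $K\times R$ matrix $\boldsymbol{\Phi}^{(K)}$, and works with the ordinary latent-factor CP model $\llbracket\bold{U};\boldsymbol{\Phi}^{(K)};\bold{A}_1;\dots;\bold{A}_D\rrbracket$. Because these are point evaluations rather than basis coefficients, every additional grid point contributes a genuinely non-trivial row, and for $K$ large enough both $\krank(\boldsymbol{\Phi}^{(K)})=R$ and the row-deletion condition on $\boldsymbol{\Phi}^{(K)}\odot\bold{A}_1\odot\cdots\odot\bold{A}_D$ hold; equality of the discretized tensors for all such $K$ then forces $\boldsymbol{\Phi}=\tilde{\boldsymbol{\Phi}}$. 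This discretization device is the idea missing from your sketch. (Both arguments tacitly assume the $\phi_r$ are linearly independent so that the functional mode can supply k-rank $R$; neither the paper nor your proposal makes this hypothesis explicit.)
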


\section{Estimation}
\label{section:estimation}

\subsection{Observation model}
\label{subsec:observationmodel}

We now consider $n$ samples of a random functional tensor $\mathcal{X}$ of order $D$, denoted $\{\mathcal{X}_i\}_{1 \leq i \leq n}$, with an underlying random functional CPD structure. For a given $\mathcal{X}_i$, we assume that entries of $\mathcal{X}_i$ are observed at similar time points. Observation time points of $\mathcal{X}_i$ are denoted $\bold{t}_i = (t_{i1}, \dots, t_{iN_i})$. The number of observations is allowed to be different across samples $\mathcal{X}_i$. In almost all settings, observations are contaminated with noise. Denoting $\mathcal{E}$ the random functional tensor modeling this noise, and $\mathcal{E}_{ik}$ the measurement error associated with $\mathcal{X}_{i}(t_{ik})$, we consider that $\mathcal{E}_{ik}$ has i.i.d. entries drawn from a distribution $\mathcal{N}(0, \sigma^2)$. The $k$th tensor-structured observation of the sample $i$, denoted $\mathcal{Y}_{ik}$, is thus modeled as, 
\begin{equation}
    \label{eq:obsmodel}
    \mathcal{Y}_{ik} = \mathcal{X}_{i}(t_{ik}) + \mathcal{E}_{ik}
\end{equation}
The noise term can also be seen as modeling the unexplained part in the CPD. In this context, our goal is to find the parameters of the random functional CPD which approximates the best observations of the tensor $\mathcal{X}$.

\subsection{Covariance terms}
\label{subsec:covarianceestimation}

In (\ref{eq:solf}) and (\ref{eq:sold}), we must estimate terms $\boldsymbol{\Sigma}_{[f]}(s, t) = \mathbb{E} [\bold{x}(s) (\bold{I}_{P} \otimes \bold{x}(t)^T)]$ and $\boldsymbol{\Sigma}_{[d]}(s, t) = \mathbb{E} [\bold{X}_{(d)}(s) (\bold{I}_{p_{(-d)}} \otimes \bold{x}(t)^T)]$ to retrieve optimal functions $\boldsymbol{\Phi}^*$ and factor matrices $\bold{A}_d^*$. As evoke previously, for any $d \in [D]$, the term $\boldsymbol{\Sigma}_{[d]}(s, t)$ can be seen as a transformation of the mode-$d$ functional covariance matrix $\boldsymbol{\Sigma}_{(d)}(s, t) = \mathbb{E}[\bold{x}_{(d)}(s) \bold{x}_{(d)}(t)^\top] \in \mathbb{R}^{P \times P}$. Recalling that $P = \prod_{d'} p_{d'}$ and $p_{(-d)} = \prod_{d'\neq d} p_{d'}$, the transformation consists of a row-space collapse of non-mode-$d$ entries into the column space, resulting in a matrix of dimensions $p_d\times P \cdot p_{(-d)}$. 

To better grasp the idea, let's consider an order-$2$ functional tensor of dimensions $p_1\times p_2$. The mode-1 functional covariance matrix can be decomposed as,
\begin{equation*}
\boldsymbol{\Sigma}_{(1)}(s,t) =
    \begin{bmatrix}
        \boldsymbol{\Sigma}_{(1, 1)}(s,t) & \dots  & \boldsymbol{\Sigma}_{(1, p_2)}(s,t)\\
        \vdots & \ddots  & \vdots \\
        \boldsymbol{\Sigma}_{(p_2, 1)}(s,t) & \dots  & \boldsymbol{\Sigma}_{(p_2, p_2)}(s,t)\\
    \end{bmatrix} \in \mathbb{R}^{p_1p_2 \times p_1p_2}
\end{equation*}
where each block functional matrix $\boldsymbol{\Sigma}_{(j, j')}(s, t) = \mathbb{E}[\mathcal{X}_{.j}(s) \mathcal{X}_{. j'}(t)^\top] \in \mathbb{R}^{p_1\times p_1}$ corresponds to functional covariance of mode-$1$ variables with fixed mode-$2$ variables $j, j' \in [p_2]$. The functional matrix $\boldsymbol{\Sigma}_{[1]}(s, t)$ can then be expressed as,
\begin{equation*}
\boldsymbol{\Sigma}_{[1]}(s,t) =
    \begin{bmatrix}
        [\boldsymbol{\Sigma}_{(1, 1)}(s,t) & \dots & \boldsymbol{\Sigma}_{(1, p_2)}(s,t)] \; \dots \; [\boldsymbol{\Sigma}_{(p_2, 1)}(s,t) & \dots & \boldsymbol{\Sigma}_{(p_2, p_2)}(s,t)]
    \end{bmatrix} \in \mathbb{R}^{p_1 \times p_1p_2p_2}
\end{equation*}
For $\boldsymbol{\Sigma}_{[2]}(s, t) \in  \mathbb{R}^{p_2\times p_1 p_2 p_1}$ a similar reasoning can help see the structure of the matrix with respect to the functional block entries of $\boldsymbol{\Sigma}_{(2)}$.

Since functional matrices $\boldsymbol{\Sigma}_{(d)}$ can be obtained by reordering $\boldsymbol{\Sigma}$, only $\boldsymbol{\Sigma}$ must be estimated to obtain an estimate of $\boldsymbol{\Sigma}_{[d]}$. For this purpose, we rely on a local linear smoothing procedure outlined in \cite{Fan2018}. First, assuming functional tensor entries are not centered ($\mathcal{M} \neq 0$), a centering pre-processing step is carried out. The mean function $m_{\bold{j}} = \mathbb{E}[y_{\bold{j}}]$, which is estimated by aggregating observations and using the smoothing procedure evoked earlier, is removed from each functional entry of $\mathcal{Y}$. Next, for each pair of \textit{centered} functional entries $\mathcal{Y}_{\bold{j}}$ and $\mathcal{Y}_{\bold{j}'}$, we estimate the (cross-)covariance surface $\Sigma_{\bold{j}\bold{j}'}(s, t)$ using a similar approach. The raw-(cross-)covariance is computed for each pair of observations within each sample. Raw (cross-)covariances are then aggregated, and the values are smoothed on a regular 2-dimensional grid using the same smoothing procedure as before.

Additionally, we can estimate $\sigma$ by averaging the residual variance of functional entry. This residual variance is obtained for each process by running a one-dimensional smoothing of the diagonal using the previously removed pairs of observations and then averaging the difference between this smoothed diagonal and the previously computed smoothed diagonal (without the noise). For more details on this procedure, we refer the reader to the seminal paper by \cite{Yao2005}.

This traditional smoothing approach was chosen for its simplicity and theoretical guarantees (\cite{Yao2005}). However, several other approaches exist to estimate (cross-)covariance surfaces. Most notably, different smoothing procedures can be used (\cite{Eilers2003}, \cite{Wood2003}). For a univariate process, the FAst Covariance Estimation (FACE) proposed by \cite{Xiao2014} allows estimating the covariance surface with linear complexity (with respect to the number of observations). An alternative FACE method, robust to sparsely observed functional data, was introduced in \cite{Xiao2017}. More recently, the method was adapted to multivariate functional data in \cite{Li2020}.


\subsection{Solving procedure}
\label{subsection:solvingprocedure}

Solution equations introduced in Proposition \ref{proposition:solutions} along the expression of sample mode optimal vector covariance matrix introduced in Corollary \ref{corollary:cov} emphasize the use of a block relaxation algorithm outlined in Algorithm \ref{algo:1} to estimate parameters in (\ref{eq:model:lowrankfu}).
\begin{algorithm}
\caption{Block relaxation procedure for solving (\ref{eq:model:lowrankfu})}\label{algo:1}
\KwResult{$\hat{\boldsymbol{\Lambda}}, \hat{\boldsymbol{\Phi}}, \hat{\bold{A}_1}, \dots, \hat{\bold{A}_D}$}
Initialize $\boldsymbol{\Lambda}^{(0)}, \boldsymbol{\Phi}^{(0)}, \bold{A}_1^{(0)}, \dots, \bold{A}_D^{(0)}$\\
\Repeat{$C(\boldsymbol{\Lambda}^{(i+1)}, \boldsymbol{\Phi}^{(i+1)}, \bold{A}_1^{(i+1)},\dots, \bold{A}_D^{(i+1)}) - C(\boldsymbol{\Lambda}^{(i)}, \boldsymbol{\Phi}^{(i)}, \bold{A}_1^{(i)},\dots, \bold{A}_D^{(i)}) < \varepsilon$}{
    \begin{align*}
    &\boldsymbol{\Lambda}^{(i+1)} \leftarrow \int \int \bold{K}(s)^\top \hat{\boldsymbol{\Sigma}}(s, t) \bold{K}(t) \text{d} s \text{d} t \\
    &\boldsymbol{\Phi}^{(i+1)} \leftarrow \argmin_{\boldsymbol{\Phi}} C(\boldsymbol{\Lambda}^{(i+1)}, \boldsymbol{\Phi}, \bold{A}_1^{(i)},\dots, \bold{A}_D^{(i)})
    \end{align*}
\For{$d=1, \dots, D$}{
\begin{equation*}
\begin{split}
\bold{A}_d^{(i+1)} \leftarrow \argmin_{\bold{A}_d} C(\boldsymbol{\Lambda}^{(i+1)}, \boldsymbol{\Phi}^{(i+1)}, \bold{A}_1^{(i+1)}, \dots, \bold{A}_{d-1}^{(i+1)}, \bold{A}_d, \bold{A}_{d+1}^{(i)}, \dots, \bold{A}_D^{(i)})
\end{split}
\end{equation*}
}
}
\end{algorithm}

This algorithm uses all non-mode-$d$ collapsed functional covariance matrices $\boldsymbol{\Sigma}_{[d]}$ as input. It produces estimates $\hat{\boldsymbol{\Lambda}}, \hat{\boldsymbol{\Phi}}, \hat{\bold{A}_1}, \dots, \hat{\bold{A}_D}$ of the parameters in model (\ref{eq:model:lowrankfu}). 

Initialization can significantly influence the algorithm's output, particularly in high-dimensional settings. In practice, random initialization is often chosen for simplicity, but it comes with the cost of running the algorithm multiple times. In our setting, we suggest initializing feature functions and matrices with the ones obtained from a standard CPD. Experience shows that this simple approach is stable and provides good results. Note that alternative approaches were proposed in \cite{Guan2023} and \cite{Han2023}, but were not studied here.

\subsection{Sample-mode inference}

For a given sample $\boldsymbol{\mathcal{Y}}_i$, when dealing with sparse and irregular observations, the estimation of $\bold{U}_i$ given by (\ref{eq:solz}) can be unstable (or untracktable), since it requires integrating $\bold{x}_i$ on possibly very few observation time points. Furthermore, this estimator can give inconsistent estimates as it does not incorporate the error term $\mathcal{E}$. Inspired from \cite{Yao2005}, we propose to get around this problem by assuming a jointly Gaussian distribution on vectors $u_{ir}$ and residual errors $\mathcal{E}_{i,k}$. In this context, we thus consider $\bold{U} \sim \mathcal{N}(\boldsymbol{\mu}, \boldsymbol{\Lambda})$. As done in \cite{FGCCA} we propose to generalize the conditional expectation estimator of the PACE-FPCA to functional tensors. Proposition \ref{proposition:scores} gives an expression of the conditional expectation of $\bold{u}_i$

\begin{proposition}
\label{proposition:scores}
    Denoting $\bold{U}_i = (u_{i1}, \dots, u_{iR}) \in  \mathbb{R}^{1 \times R}$ $\bold{A}_{f,i} = [ \boldsymbol{\Phi}(t_{i1})^\top, \dots, \boldsymbol{\Phi}(t_{iN_i})^\top ]^\top \in \mathbb{R}^{N_i \times R}$, $\bold{F}_i = \bold{A}_{(D)} \odot \bold{A}_{f,i} \in \mathbb{R}^{N_i P \times R}$, assuming joint Gaussian distribution on $\bold{U}$ and residual errors, a Bayesian estimator of $\bold{U}_i$ is 
    \begin{equation}
        \label{eq:bayes_estimator}
        \tilde{\bold{U}}_i = \mathbb{E}[\bold{U}_i | \boldsymbol{\mathcal{Y}}_i] = \boldsymbol{\Lambda} \bold{F}_i^\top (\bold{F}_i \boldsymbol{\Lambda} \bold{F}_i^\top + \sigma^2\bold{I})^{-1} [\boldsymbol{\bold{y}}_i - \boldsymbol{\mu}_i]
    \end{equation}
\end{proposition}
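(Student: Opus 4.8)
The plan is to recognize Proposition~\ref{proposition:scores} as a direct instance of the standard Gaussian conditioning formula applied to a linear model for the stacked observations of sample $i$. First I would write the vectorized observation model explicitly: stacking all $N_i$ tensor-valued observations $\mathcal{Y}_{i1}, \dots, \mathcal{Y}_{iN_i}$ of sample $i$ into a single vector $\bold{y}_i \in \mathbb{R}^{N_i P}$, the LF-PARAFAC model~(\ref{eq:model:lowrankfu}) together with the observation model~(\ref{eq:obsmodel}) gives $\bold{y}_i = \boldsymbol{\mu}_i + \bold{F}_i \bold{U}_i^\top + \boldsymbol{\varepsilon}_i$, where $\bold{F}_i = \bold{A}_{(D)} \odot \bold{A}_{f,i}$ collects the Khatri--Rao structure of the tabular factors with the evaluated functional factors at the observation times, and $\boldsymbol{\varepsilon}_i = \vectorize$ of the stacked noise tensors, so that $\boldsymbol{\varepsilon}_i \sim \mathcal{N}(0, \sigma^2 \bold{I}_{N_i P})$ by the i.i.d.\ assumption on $\mathcal{E}_{ik}$. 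The key bookkeeping step here is to check that the rank-1 structure $u_r \phi_r(t_{ik})(\bold{a}_{1r}\circ\dots\circ\bold{a}_{Dr})$, once vectorized across modes and stacked across the $N_i$ time points, assembles exactly into the claimed $\bold{F}_i \bold{U}_i^\top$; this is the point where the ordering conventions for the Khatri--Rao product and for the stacking of observations must be made consistent with the definitions of $\bold{A}_{(D)}$ and $\bold{A}_{f,i}$ given in the statement.

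Next I would invoke the joint Gaussian assumption: $\bold{U}_i \sim \mathcal{N}(\boldsymbol{\mu}, \boldsymbol{\Lambda})$ independently of $\boldsymbol{\varepsilon}_i$, so that $(\bold{U}_i, \bold{y}_i)$ is jointly Gaussian with $\cov(\bold{U}_i^\top, \bold{y}_i) = \boldsymbol{\Lambda}\bold{F}_i^\top$ and $\var(\bold{y}_i) = \bold{F}_i \boldsymbol{\Lambda} \bold{F}_i^\top + \sigma^2 \bold{I}$. Applying the classical formula for the conditional mean of one Gaussian block given another, $\mathbb{E}[\bold{U}_i \mid \bold{y}_i] = \mathbb{E}[\bold{U}_i] + \cov(\bold{U}_i^\top, \bold{y}_i)^\top \var(\bold{y}_i)^{-1}(\bold{y}_i - \mathbb{E}[\bold{y}_i])$, and noting that $\mathbb{E}[\bold{y}_i] = \boldsymbol{\mu}_i + \bold{F}_i\boldsymbol{\mu}^\top$, yields precisely~(\ref{eq:bayes_estimator}) after transposing to match the row-vector convention for $\tilde{\bold{U}}_i$. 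If the paper's convention is that $\bold{U}$ is centered in this subsection (the centering discussion earlier suggests the mean may be folded into $\boldsymbol{\mu}_i$), the $\boldsymbol{\mu}^\top$ term is absorbed and one recovers exactly the stated expression; I would state the convention explicitly to avoid ambiguity.

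The main obstacle is not analytic but notational: one must carefully verify the identity $\bold{y}_i - \boldsymbol{\mu}_i = \bold{F}_i \bold{U}_i^\top + \boldsymbol{\varepsilon}_i$ by unwinding the tensor-to-vector conversions, in particular confirming that evaluating the functional tensor $\llbracket \bold{U}; \boldsymbol{\Phi}(t); \bold{A}_1; \dots; \bold{A}_D \rrbracket$ at the grid $\bold{t}_i$ and vectorizing gives a design matrix whose columns are indexed by $r \in [R]$ and whose rows are indexed by $(k, \bold{j})$ with entry $\phi_r(t_{ik}) \prod_{d} a_{d j_d r}$ --- exactly the Khatri--Rao product $\bold{A}_{(D)} \odot \bold{A}_{f,i}$ up to the fixed row permutation induced by the stacking order. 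A secondary, minor point to address is the invertibility of $\bold{F}_i \boldsymbol{\Lambda} \bold{F}_i^\top + \sigma^2 \bold{I}$, which holds whenever $\sigma^2 > 0$ regardless of the rank of $\bold{F}_i$, so no rank condition on the (possibly very short) observation vector is needed --- this is precisely the robustness-to-sparsity point motivating the Bayesian estimator over the plug-in estimator~(\ref{eq:solz}). Once the design-matrix identity is established, the remainder is a one-line application of the Gaussian conditioning lemma, which I would either cite or include as a short self-contained computation.
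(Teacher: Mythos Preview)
Your proposal is correct and follows essentially the same route as the paper's own proof: write the stacked observations as a linear Gaussian model $\bold{y}_i = \bold{F}_i \bold{U}_i + \boldsymbol{e}_i$, establish the joint Gaussianity of $(\bold{y}_i,\bold{U}_i)$, and apply the standard conditional-mean formula for a partitioned Gaussian (the paper does this via an explicit block-matrix affine representation and cites \cite{Rasmussen2006}). Your extra care about the vectorization bookkeeping, the handling of $\boldsymbol{\mu}$ versus $\boldsymbol{\mu}_i$, and the automatic invertibility of $\bold{F}_i\boldsymbol{\Lambda}\bold{F}_i^\top + \sigma^2\bold{I}$ is welcome and in fact more explicit than what the paper itself provides.
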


Coupled with estimations $\hat{\boldsymbol{\Lambda}}, \hat{\boldsymbol{\Phi}}, \hat{\bold{A}_1}, \dots, \hat{\bold{A}_D}$ obtained from Algorithm \ref{algo:1}, and estimations $\hat{\sigma}$ and $\hat{\boldsymbol{\mu}}_i$ of $\sigma$ and $\boldsymbol{\mu}_i$, obtained as described in Section \ref{subsec:covarianceestimation}, Proposition \ref{proposition:scores} allows to derive the following empirical Bayes estimator for $\bold{U}_i$:
\begin{equation}
        \hat{\bold{U}}_i = \hat{\boldsymbol{\Lambda}} \hat{\bold{F}}_i^\top (\hat{\bold{F}}_i \hat{\boldsymbol{\Lambda}} \hat{\bold{F}}_i^\top + \hat{\boldsymbol{\Sigma}}_i)^{-1} [\boldsymbol{\bold{y}}_i - \hat{\bold{m}}_i]
    \end{equation}

\subsection{Rank selection}

We suggest using two methods for selecting the rank of the decomposition. The first uses a likelihood cross-validation criterion as introduced in \cite{Lock2018a}. In this setting, the data is split between a training set $\mathcal{Y}^{\text{train}}$ and a test set $\mathcal{Y}^{\text{test}}$. The rank is then chosen such that it maximizes the log-likelihood of $\mathcal{Y}^{\text{test}}$ using parameters $\boldsymbol{\Theta}^{\text{train}}$ estimated with $\mathcal{Y}^{\text{train}}$. Denoting the covariance of $\bold{y}_i$, the vectorization of $\mathcal{Y}_i$, $\boldsymbol{\Sigma}_{\bold{y}, i} = (\bold{A}_{D} \odot \bold{A}_{f,i}) \boldsymbol{\Lambda} (\bold{A}_{D} \odot \bold{A}_{f,i})^\top + \hat{\sigma}^2 \bold{I}_{N_iP}$, $\boldsymbol{\Theta} = (\boldsymbol{\Lambda}, \boldsymbol{\Phi}, \bold{A}_1, \dots, \bold{A}_D)$ the parameters, $\bold{y}_i$ the vector of observations of sample $i$, and $\bold{m}_{i}$ the vectorized mean tensor of $\mathcal{Y}$, at time points $\bold{t}_i$, we can derive the log-likelihood of (\ref{eq:obsmodel}):
\begin{equation}
    L(\boldsymbol{\Theta}| \boldsymbol{\mathcal{Y}}) = -\frac{1}{2} \sum_{i=1}^n (\bold{y}_{i} - \bold{m}_{i})^\top \boldsymbol{\Sigma}_{\bold{y},i}^{-1} (\bold{y}_{i} - \bold{m}_{i}) - \frac{1}{2} \log{|\boldsymbol{\Sigma}_{\bold{y}, i}|}
\end{equation}
Using the expression of the likelihood, we thus aim at maximizing:
\begin{equation}
    \text{LCV}(R) = \frac{1}{K} \sum_{k=1}^K L(\boldsymbol{\Theta}^{\text{train}}_k | \boldsymbol{\mathcal{Y}}^{\text{test}}_k)
\end{equation}
where the index $k$ denotes the fold number in the cross-validation procedure. This approach gives an accurate estimation of the rank but it can be computationally expensive if the number of folds $K$ is too large, notably in a leave-one-out context. The second method is based on the Akaike Information Criterion (AIC), and offers a cheaper and faster alternative. In this setting, the rank $R$ is chosen such that it minimizes:

 \begin{equation}
     \text{AIC}(R) = R - L(\hat{\boldsymbol{\Theta}}_R| \boldsymbol{\mathcal{Y}})
 \end{equation}

\section{Application study}
\label{section:application}

To help characterize cognitive decline among Alzheimer's Disease (AD) and Cognitively Normal (CN) patients, we propose to apply the introduced method to various neurocognitive markers measured over several years in the Alzheimer’s Disease Neuroimaging Initiative (ADNI) study. In this context, we consider 6 different cognitive markers observed over up to 10 years on 518 healthy controls with normal cognition and 370 patients diagnosed with Alzheimer's disease, resulting in $n=888$ total patients. Neurocognitive tests are often used by physicians to help assess the cognitive abilities of patients. In the context of Alzheimer's disease and, more generally, dementia, they play an important role in the diagnosis and follow-up. Most neurocognitive tests consist of more or less simple questions asked to patients regarding their everyday lives or to complete a task involving specific cognitive abilities. The 6 cognitive markers considered here are the Alzheimer's Disease Assessment Scale (ADAS), the Mini-Mental State Examination (MMSE), the Rey Auditory Verbal Learning Test (RAVLT), the Functional Activities Questionnaire (FAQ), the Montreal Cognitive Assessment (MOCA), and the logical memory delayed recall score (LDEL). Scores were normalized for improved readability. Markers were observed at different time points across individuals, resulting in a heavily irregular sampling scheme with $1064$ unique time points overall. Only subjects with at least 2 observations were considered, resulting in an average number of observations per subject of 4.9. Trajectories are represented in Figure \ref{fig:traj}.

\begin{figure}
    \centering
    \includegraphics[scale=0.5]{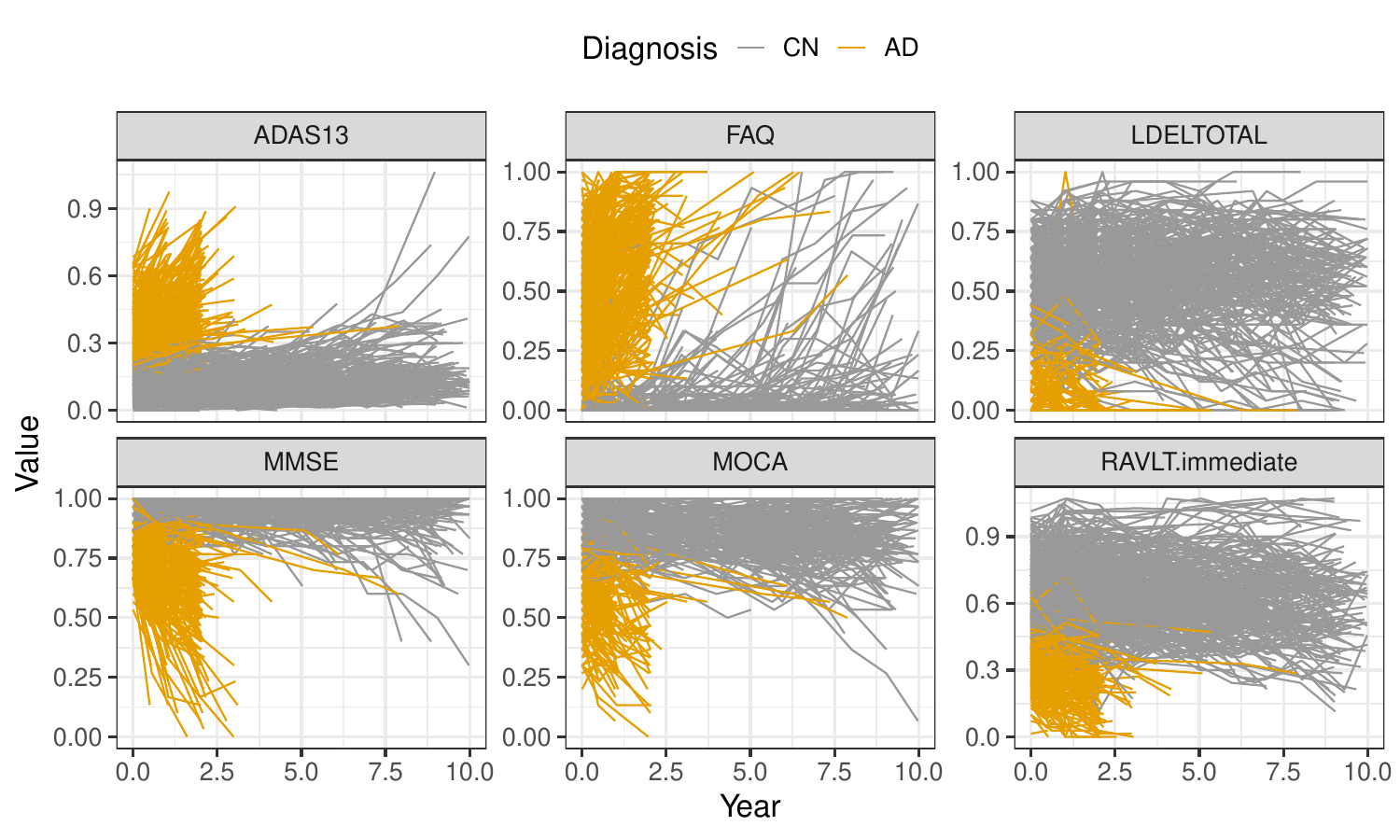}
    \caption{Trajectories of six cognitive scores measured over 10 years. Colors correspond to baseline diagnosis: Cognitively Normal (CN), Alzheimer's Disease (AD)}
    \label{fig:traj}
\end{figure}


The LF-PARAFAC decomposition was applied to the resulting random functional "tensor", which lies in $L^2(\mathcal{I})^6$, with $\mathcal{I} = [0,10]$ and can be seen as a tensor with dimensions $\text{subject} \times \text{time} \times \text{marker}$. The rank $R=4$ of the decomposition was selected so that results remained the same when using different initializations. The smoothing procedure' bandwidths were manually set to $1$. The standard PARAFAC decomposition was tested on the traditional tensor of dimensions $880 \times 1068 \times 6$ but failed to provide results due to the large dimension of the time mode and the high number of missing values. The 4 functions and vectors retrieved from the decomposition are represented in Figure \ref{fig:loadings} along with population mode scores.

\begin{figure}
    \centering
    \includegraphics[scale=0.44, align=t]{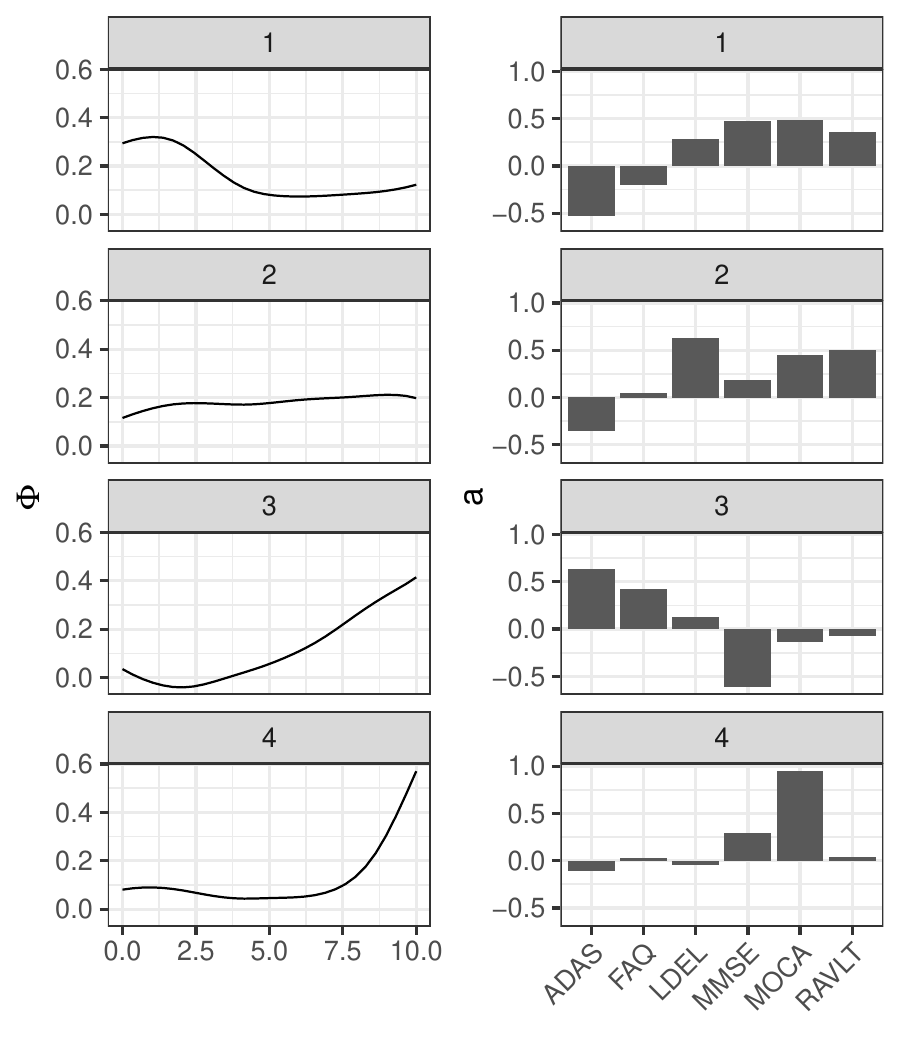}
    \includegraphics[scale=0.46, align=t]{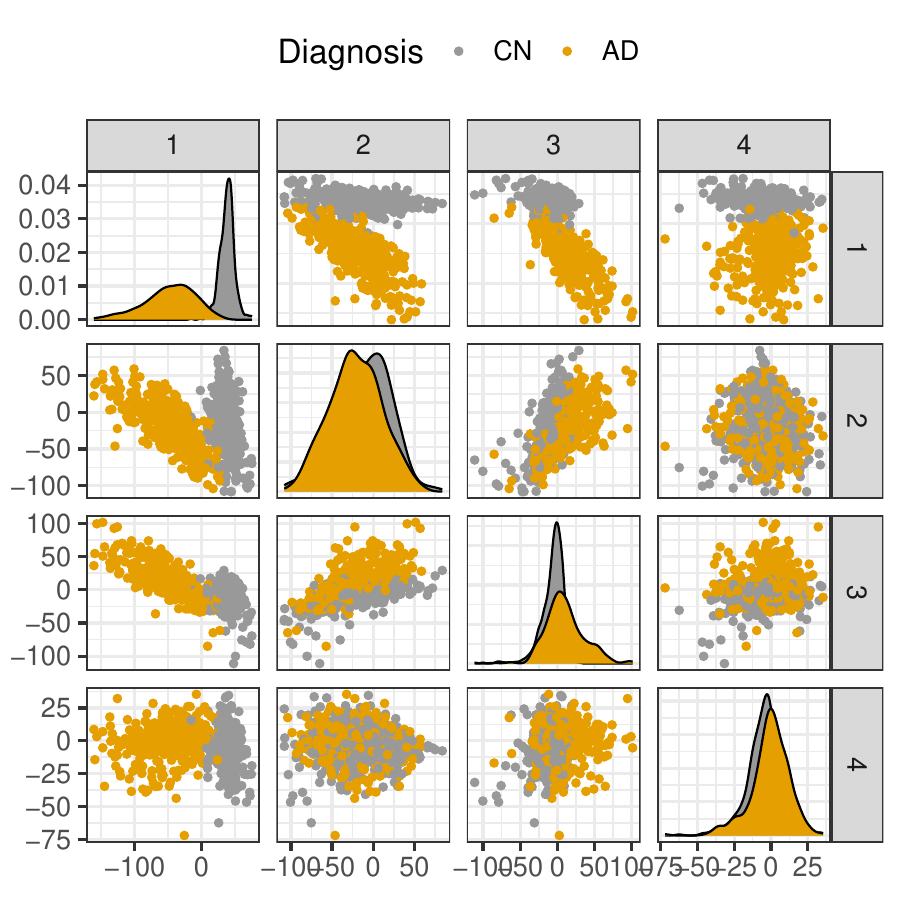}
    \caption{(left) Functions and vectors retrieved by a rank $R=4$ LF-PARAFAC decomposition. (right) Population mode scores colored by diagnosis at baseline: Cognitively Normal (CN), Alzheimer's Disease (AD)}
    \label{fig:loadings}
\end{figure}

Figures show that the LF-PARAFAC decomposition successfully captured inherent information of the random functional tensor. Notably, as expected, the method has retrieved smooth continuous vectors for the functional mode, describing predominant tendencies among cognitive scores. When observed with feature mode loadings, it seems that the first function, which has the highest associated sample-mode score variance, captures a sharp deterioration of cognitive abilities, mostly over the first 5 years: decreasing LDEL, MMSE, MOCA, and RAVLT, along with increasing ADAS and FAQ. This trend is also asserted by the distribution of the associated sample-mode scores: AD patients, who typically show a notable deterioration in cognitive abilities, have score values well separated from CN patients' scores. The second function retrieved describes a stable trend of cognitive measures over the ten years. The associated vector tells us that stable and high MMSE, MOCA, and RAVLT values are associated with stable and almost null FAQ and stable and low ADAS. The third function, which is still associated with a high variance sample-mode score, describes a slowly increasing trend for ADAS and FAQ, along with a slowly decreasing trend for MMSE, MOCA, and RAVLT. This trend might describe a slow decline in cognitive abilities, not as sharp as the first function described earlier. Finally, the fourth function is hard to interpret in itself. The feature-mode vector suggests that it describes a trend shared by both MOCA and MMSE scores. We may associate this observation with the fact that MOCA and MMSE are closely related cognitive tests relying on similar questions and that the function captures this shared information.

\section{Simulations study}
\label{section:simulations}

Using model (\ref{eq:obsmodel}), we generate $N = 100$ functional tensors of rank $R \in \{3, 5, 10\}$ on the interval $\mathcal{I} = [0, 1]$. Functions $\phi_{r}$ are randomly generated using a Fourier basis, with $M=5$ basis functions, and elements of factor matrices $\bold{A}_d$ are sampled from a uniform distribution. As assumed previously, sample mode vectors $\bold{u}_i$ are drawn from a distribution $\mathcal{N}(\bold{0}_R, \boldsymbol{\Lambda})$, where $\boldsymbol{\Lambda}$ is the diagonal matrix $\diag(\{ r^2 \})_{R \geq r \geq 1}$. The functional tensor is sampled on a regular grid of $\mathcal{I}$ with size $K=30$. The resulting array is then sparsified by removing a proportion $s \in \{0.0, 0.2, 0.5, 0.8\}$ of observations. We consider two settings for tensor dimensions: one of order $D=2$, with $p_1 = 10$, and one of order $D=3$ with $p_1 = p_2 = 5$ (in Appendix). Observation tensors are contaminated with i.i.d. measurement errors of variance $\sigma^2 = 1$, and we adjust the signal-to-noise ratio (SNR) by multiplying original tensors by a constant $c_{\text{SNR}}$. Three settings are considered: $\text{SNR} \in \{0.5, 1, 2\}$.

We compare our approach to the native CPD (without smoothing and probabilistic modeling), the Functional Tensor Singular Value Decomposition (FTSVD) (\cite{Han2023}), and the Multivariate Functional Principal Component Analysis (MFPCA) (\cite{Happ2018}). Two comparison tasks are considered: a reconstruction task and a parameter recovery task. Since the FTSVD does not handle missing values, it is not considered when $s \neq 0$. MFPCA is only considered for the reconstruction task as it is not based on the functional CPD model. Furthermore, since the method only deals with multivariate functional data, we apply the mode on the vectorized tensor. To obtain an approximation of the original tensor, we then tensorize back the approximated functional vector. For the reconstruction task, the metric used to compare methods is the root mean squared error (RMSE) between original tensors $\mathcal{X}_i$ and approximations $\hat{\mathcal{X}}_i$. For the parameter recovery task, we compare the true functions $\phi_{r}$ and estimated functions $\hat{a}_{fr}$ using the maximum principal angle (\cite{Bjorck1973}). MFPCA is carried out using the $\texttt{R}$ package $\texttt{MFPCA}$, the CPD is obtained using the \texttt{multiway} package, and the FTSVD is performed using the code from the associated paper. Simulations were run on a standard laptop.

\begin{figure}
    \centering
    \includegraphics[width=0.95\textwidth]{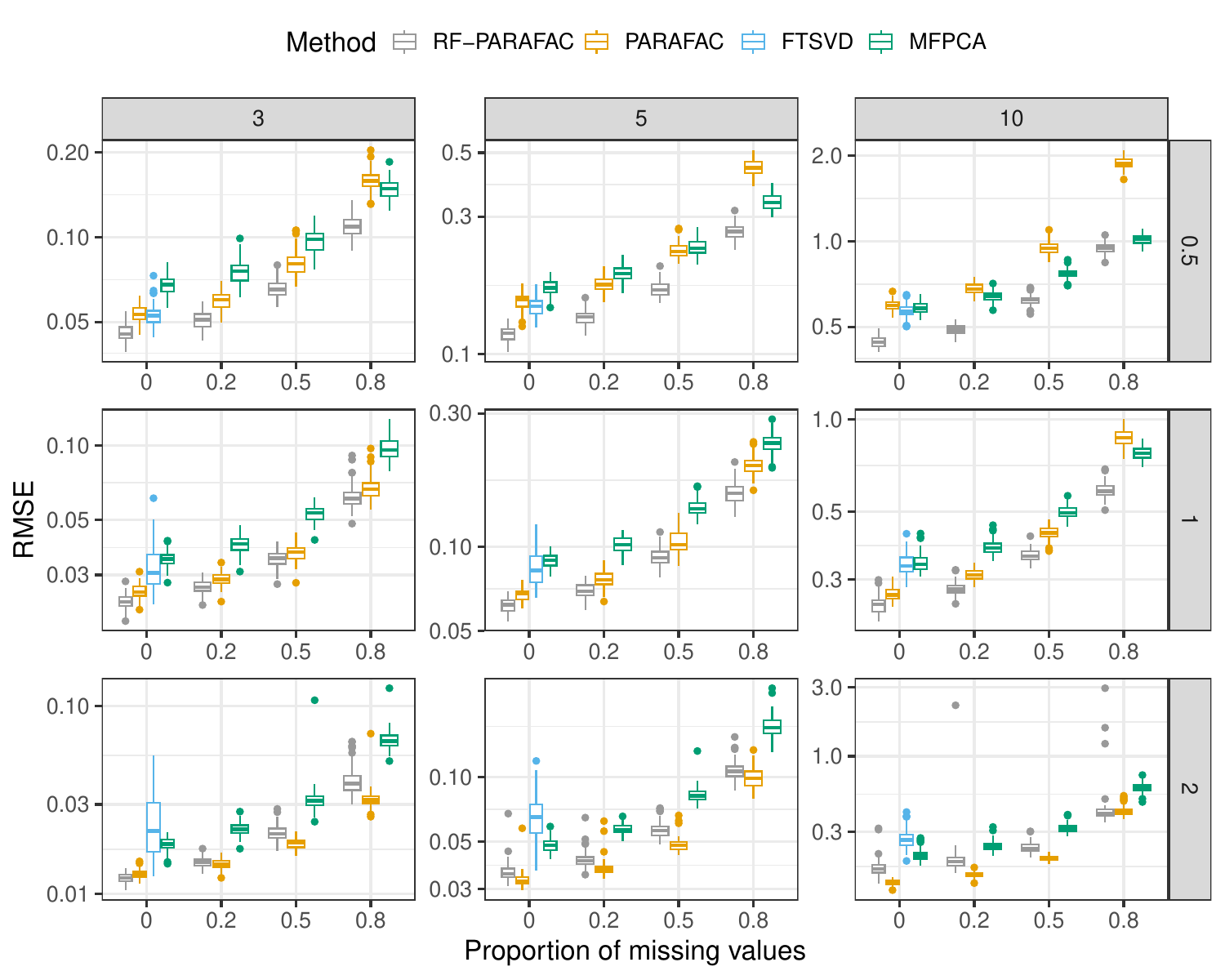}
    \caption{Reconstruction error on order $D=2$ tensors using $p_1 = 10$, with $N=100$ samples. Comparing LF-PARAFAC, PARAFAC (standard), FTSVD, and MFPCA for different values of $R$ (column-wise facets), different proportions of missing values (x-axis), and different signal-to-noise ratios (SNR) (row-wise facets). Plot obtained from $100$ simulation runs.}
    \label{fig:sim}
\end{figure}

\begin{figure}
    \centering
    \includegraphics[width=0.95\textwidth]{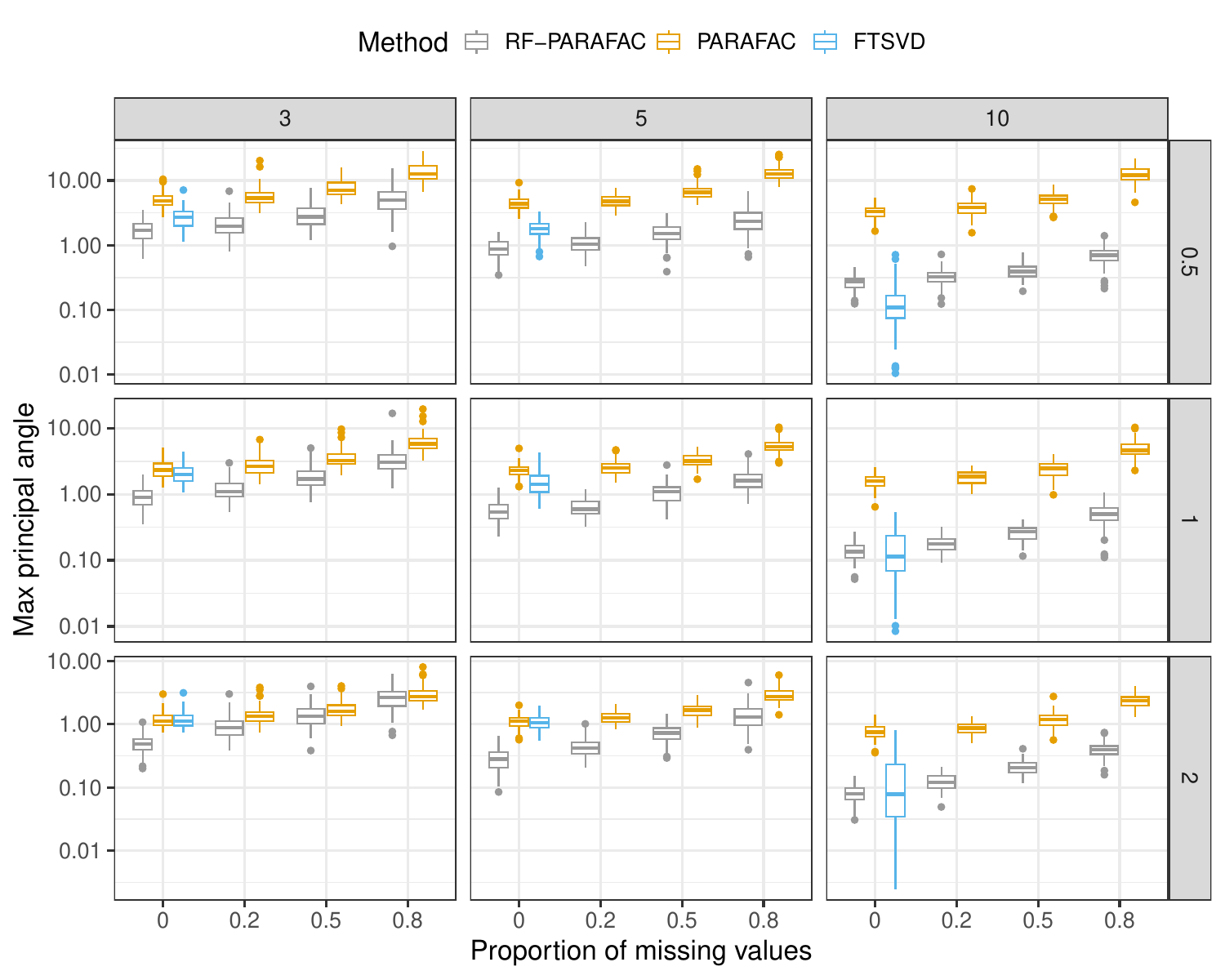}
    \caption{Parameter estimation error (bottom) on order $D=2$ tensors using $p_1 = 10$, with $N=100$ samples. Comparing LF-PARAFAC, PARAFAC (standard), FTSVD, and MFPCA for different values of $R$ (column-wise facets), different proportions of missing values (x-axis), and different signal-to-noise ratios (SNR) (row-wise facets). Plot obtained from $100$ simulation runs.}
    \label{fig:sim}
\end{figure}

As observed on Figure \ref{fig:sim}, we note that our approach outperforms all methods for SNR $\in \{ 0.5, 1 \}$ for the reconstruction task,. For SNR$=2$, the native CPD seems to be the best approach. However, as seen in the parameter estimation task, smoothing approaches are more accurate for retrieving functions $\phi_{r}$ in all cases. Thanks to the smoothing procedure, our approach is particularly efficient when the SNR is low, and the sparsity is high. The FTSVD performs well in retrieving the true feature functions but seems to be slightly worse than the CPD at reconstructing the original tensor. The unfavorable setting may explain this observation, as we do not assume any additional regularity conditions on the space on which lie functions $\phi_{r}$. Nevertheless, the good performance of our approach comes with an overall increased computational cost, as other approaches ran much faster most of the time. 

In the higher-order setting (see Figure \ref{fig:sim2}), we observe overall the same results. One important difference is that all tensor methods outperform the MFPCA in most cases, as the vectorization leads to the tensor structure's loss and is inefficient.

\section{Discussion}
\label{section:discussion}

We introduced a new approach for decomposing efficiently a functional tensor with an underlying random structure. Thanks to the (cross-)covariance formulation of the solving procedure, the method can be applied to a broad range of longitudinal settings,  notably to sparse and irregular sampling schemes. Simulation results show that our approach retrieves feature functions closer to the true smooth functions, and the approximation made from those functions is indeed better.

The developments introduced in this paper pave the way for a broad range of extensions, the more prominent of which is tensor regression. A similar methodology is currently being investigated to perform linear functional tensor regression. Another interesting extension of our work would be to consider that not one but multiple modes have a smooth structure. This situation is often encountered in spatial statistics. As evoked earlier, various tensor decomposition frameworks have been introduced in such settings. However, those methods were developed from a signal-processing setting rather than a statistical setting. Therefore, no probabilistic modeling is made in the decomposition, and the random structure is not considered.

Additionally, our work emphasizes the need for multivariate high-dimensional fast covariance estimation methods. Indeed, one major drawback of our approach is its computational complexity. Currently, the method requires estimating the (cross-)covariance surface between each pair of functional entries of the tensor, a number that grows exponentially with the order of the tensor. We tried to bypass this problem by assuming a separable covariance structure on the covariance function $\boldsymbol{\Sigma}$ since, under simple assumptions, assuming such a structure can significantly reduce the number of computations. Although this approach seems to give good results when the covariance is separable (notably, in simulation settings), it seems not to perform well as soon as the assumption is not verified. 

\section*{Acknowledgements}


Data collection and sharing for the Alzheimer's Disease Neuroimaging Initiative (ADNI) is funded by the National Institute on Aging (National Institutes of Health Grant U19 AG024904). The grantee organization is the Northern California Institute for Research and Education. In the past, ADNI has also received funding from the National Institute of Biomedical Imaging and Bioengineering, the Canadian Institutes of Health Research, and private sector contributions through the Foundation for the National Institutes of Health (FNIH) including generous contributions from the following: AbbVie, Alzheimer’s Association; Alzheimer’s Drug Discovery Foundation; Araclon Biotech; BioClinica, Inc.; Biogen; Bristol-Myers Squibb Company; CereSpir, Inc.; Cogstate; Eisai Inc.; Elan Pharmaceuticals, Inc.; Eli Lilly and Company; EuroImmun; F. Hoffmann-La Roche Ltd and its affiliated company Genentech, Inc.; Fujirebio; GE Healthcare; IXICO Ltd.; Janssen Alzheimer Immunotherapy Research \& Development, LLC.; Johnson \& Johnson Pharmaceutical Research \& Development LLC.; Lumosity; Lundbeck; Merck \& Co., Inc.; Meso Scale Diagnostics, LLC.; NeuroRx Research; Neurotrack Technologies; Novartis Pharmaceuticals Corporation; Pfizer Inc.; Piramal Imaging; Servier; Takeda Pharmaceutical Company; and Transition Therapeutics.

\clearpage

\appendix

\section{Proofs}
\label{sec:annex:proofs}

\begin{proof}[Proposition \ref{proposition:psi}]
    To obtain $\Psi^*$, we look for solutions of
    \begin{equation*}
        \min_{\bold{U}}\| \mathcal{X} - \llbracket \bold{U}; \boldsymbol{\Phi}; \bold{A}_1; \dots; \bold{A}_D \rrbracket\|^2
    \end{equation*}
    for fixed $\boldsymbol{\Phi}, \bold{A}_1, \dots, \bold{A}_D$ and a given $\mathcal{X}$. Developing the criterion, we get
    \begin{align*}
        \| \mathcal{X} - \llbracket \bold{U}; \boldsymbol{\Phi}; \bold{A}_1; \dots; \bold{A}_D \rrbracket\|^2 &= \int \|   \bold{x}(t) - (\bold{A}_D \odot \dots \odot \bold{A}_1 \odot \boldsymbol{\Phi}(t)) \bold{U}^\top \|^2 \\
        &= \int \|   \bold{x}(t) - (\bold{A}_{(D)} \odot \boldsymbol{\Phi}(t)) \bold{U}^\top \|^2
    \end{align*}
    Deriving this expression with respect to U and setting the derivative to $0$ leads to
    \begin{equation*}
        - \int \bold{x}(t)^\top (\bold{A}_{(D)} \odot \boldsymbol{\Phi}(t)) + \bold{U}^* \left ( \int   (\bold{A}_{(D)} \odot \boldsymbol{\Phi}(t))^\top (\bold{A}_{(D)} \odot \boldsymbol{\Phi}(t)) \right ) = 0
    \end{equation*}
    Which, under mild assumptions leads to
    \begin{equation*}
        \bold{U}^* = \left ( \int \bold{x}(t)^T (\bold{A}_{(D)} \odot \boldsymbol{\Phi}(t)) \right ) \left ( \int (\bold{A}_{(D)} \odot \boldsymbol{\Phi}(t))^T (\bold{A}_{(D)} \odot \boldsymbol{\Phi}(t)) \right )^{-1}
    \end{equation*}
\end{proof}


\begin{proof}[Corollary \ref{corollary:cov}]
    From (\ref{eq:solz}) we have
    \begin{align*}
        \mathbb{E}[{\bold{U}^*}^T \bold{U}^*] &= \mathbb{E}[\left ( \int \bold{x}(s)^T \bold{K}(s) \right )^\top \left ( \int \bold{x}(t)^T \bold{K}(t) \right )] \\
        &=   \int \int \bold{K}(s)^\top \mathbb{E}[\bold{x}(s) \bold{x}(t)^T] \bold{K}(t) \\
        &=   \int \int \bold{K}(s)^\top \boldsymbol{\Sigma}_{[f]}(s, t) \bold{K}(t)
    \end{align*}
\end{proof}

\begin{proof}[Lemma \ref{lemma:1}]
    We have
    \begin{align*}
        \mathbb{E} [\| \mathcal{X} - \llbracket \bold{U}; \boldsymbol{\Phi}; \bold{A}_1; \dots; \bold{A}_D \rrbracket\|^2] &= \int \mathbb{E} [\|   \bold{x}(t) - (\bold{A}_D \odot \dots \odot \bold{A}_1 \odot \bold{U}^*) \boldsymbol{\Phi}(t)^\top \|^2] \\
        &= \int \mathbb{E} [ \| \bold{x}(t) - (\bold{A}_{(D)} \odot \bold{U}^*) \boldsymbol{\Phi}(t)^\top \|^2 ] \\
        &= \int \mathbb{E} [\|   \bold{x}(t) \|^2]  - 2 \int \mathbb{E} [\bold{x}(t)^\top (\bold{A}_{(D)} \odot \bold{U}^*) \boldsymbol{\Phi}(t)^\top] \\ & + \int \mathbb{E} [ \|  (\bold{A}_{(D)} \odot \bold{U}^*) \boldsymbol{\Phi}(t)^\top \|^2] \\
    \end{align*}
    In this expression, the first term does not depend on the parameters, and can thus be removed for optimization. Next, using the property $(\bold{A} \otimes \bold{B}) (\bold{C} \odot \bold{D}) = (\bold{A}\bold{C}) \odot (\bold{B}\bold{D})$, we can write
    \begin{align*}
        \int \mathbb{E} [\bold{x}(t)^\top (\bold{A}_{(D)} \odot \bold{U}^*) \boldsymbol{\Phi}(t)^\top] 
        &= \int \mathbb{E} [\bold{x}(t)^\top (\bold{A}_{(D)} \odot \int \bold{x}(t)^\top \bold{K}(s)) \boldsymbol{\Phi}(t)^\top] \\
        &= \int \int \mathbb{E} [\bold{x}(t)^\top (\bold{A}_{(D)} \odot \bold{x}(t)^\top \bold{K}(s)) \boldsymbol{\Phi}(t)^\top] \\
        &= \int \int \mathbb{E} [\bold{x}(t)^\top (\bold{I}_{p_{(D)}} \otimes \bold{x}(t)^\top)] (\bold{A}_{(D)} \odot \bold{K}(s)) \boldsymbol{\Phi}(t)^\top \\
        &= \int \int \boldsymbol{\Sigma}_{[1]}(s, t) (\bold{A}_{(D)} \odot \bold{K}(s)) \boldsymbol{\Phi}(t)^\top \\
    \end{align*}
    and, 
    \begin{align*}
        \int \mathbb{E} [ \|  (\bold{A}_{(D)} \odot \bold{U}^*) \boldsymbol{\Phi}(t)^\top \|^2] &= \int \mathbb{E} [\boldsymbol{\Phi}(t) (\bold{A}_{(D)} \odot \bold{U}^*)^\top (\bold{A}_{(D)} \odot \bold{U}^*) \boldsymbol{\Phi}(t)^\top] \\
        &= \int \mathbb{E} [\boldsymbol{\Phi}(t) (\bold{A}_{(D)}^\top\bold{A}_{(D)} \ast {\bold{U}^*}^\top {\bold{U}^*})\boldsymbol{\Phi}(t)^\top] \\
        &= \int \boldsymbol{\Phi}(t) (\bold{A}_{(D)}^\top\bold{A}_{(D)} \ast \mathbb{E} [{\bold{U}^*}^\top {\bold{U}^*}])\boldsymbol{\Phi}(t)^\top \\
        &=\int \boldsymbol{\Phi}(t) (\bold{A}_{(D)}^\top\bold{A}_{(D)} \ast \boldsymbol{\Lambda}^*)\boldsymbol{\Phi}(t)^\top \\
    \end{align*}
    which leads to 
    \begin{equation}
    \begin{split}
    \label{eq:C}
        C(\boldsymbol{\Lambda}^*, \boldsymbol{\Phi}, \bold{A}_1, \dots, \bold{A}_D) = \int \boldsymbol{\Phi}(t) (\bold{A}_{(D)}^\top \bold{A}_{(D)} \ast \boldsymbol{\Lambda}^*) \boldsymbol{\Phi}(t)^\top \\ -2 \int \int \boldsymbol{\Sigma}_{[f]}(s, t) (\bold{A}_{(D)} \odot \bold{K}(s)) \boldsymbol{\Phi}(t)^\top 
    \end{split}
    \end{equation}
    Alternatively, for any $d\in [D]$, we can rewrite (\ref{eq:C}) as
    \begin{equation*}
        \begin{split}
            C(\boldsymbol{\Lambda}^*, \boldsymbol{\Phi}, \bold{A}_1, \dots, \bold{A}_D) = \int \bold{A}_{d} (\bold{A}_{(-d), f}(t)^\top \bold{A}_{(-d), f}(t) \ast \boldsymbol{\Lambda}^*) \bold{A}_{d}^\top \\ -2 \int \int \boldsymbol{\Sigma}_{[d]}(s, t) (\bold{A}_{(-d), f}(t) \odot \bold{K}(s)) \bold{A}_{d}^\top 
    \end{split}
    \end{equation*}
\end{proof}

In the following we denote $C_f: \boldsymbol{\Phi} \mapsto C(\boldsymbol{\Lambda}^*, \boldsymbol{\Phi}, \bold{A}_1, \dots, \bold{A}_D)$ for any fixed $\boldsymbol{\Lambda}^*, \bold{A}_1, \dots, \bold{A}_D$, and similarly $C_d: \bold{A}_d \mapsto C(\boldsymbol{\Lambda}^*, \boldsymbol{\Phi}, \bold{A}_1, \dots, \bold{A}_D)$ for fixed $\boldsymbol{\Lambda}^*,  \boldsymbol{\Phi}, \bold{A}_1, \dots, \bold{A}_{d-1}, \bold{A}_{d+1}, \dots, \bold{A}_{D}$.

\begin{lemma} 
\label{lemma:fgateaudiff}
    The operator $C_f$ is Gâteau-differentiable and at any $\boldsymbol{\Phi} \in \mathcal{H}^{1\times R}$ and $s \in \mathcal{I}$, the derivative is given by
    \begin{equation*}
        C_f'(\boldsymbol{\Phi})(s) 
        = 2 \boldsymbol{\Phi}(s) (\bold{A}_{(D)}^\top \bold{A}_{(D)} \ast \boldsymbol{\Lambda}^*) -2 \int \boldsymbol{\Sigma}_{[f]}(s, t) (\bold{A}_{(D)} \odot \bold{K}(t))
    \end{equation*}
\end{lemma}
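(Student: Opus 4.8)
The claim is that the functional $C_f:\boldsymbol{\Phi}\mapsto C(\boldsymbol{\Lambda}^*,\boldsymbol{\Phi},\bold{A}_1,\dots,\bold{A}_D)$ is Gâteaux-differentiable with the stated derivative. The starting point is the explicit expression \eqref{eq:C}, which (for fixed $\boldsymbol{\Lambda}^*,\bold{A}_1,\dots,\bold{A}_D$) exhibits $C_f(\boldsymbol{\Phi})$ as the sum of a quadratic term $Q(\boldsymbol{\Phi})=\int\boldsymbol{\Phi}(t)(\bold{A}_{(D)}^\top\bold{A}_{(D)}\ast\boldsymbol{\Lambda}^*)\boldsymbol{\Phi}(t)^\top\,\mathrm{d}t$ and a linear term $\ell(\boldsymbol{\Phi})=-2\int\!\!\int\boldsymbol{\Sigma}_{[f]}(s,t)(\bold{A}_{(D)}\odot\bold{K}(s))\boldsymbol{\Phi}(t)^\top\,\mathrm{d}s\,\mathrm{d}t$. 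So I would compute the Gâteaux derivative directly: for a test direction $\bold{H}\in\mathscr{H}^{1\times R}$ form $C_f(\boldsymbol{\Phi}+\varepsilon\bold{H})$, expand in powers of $\varepsilon$, and read off the coefficient of $\varepsilon$ as $\varepsilon\to 0$.

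\smallskip
First I would handle the quadratic term. Writing $\bold{M}=\bold{A}_{(D)}^\top\bold{A}_{(D)}\ast\boldsymbol{\Lambda}^*$ (a symmetric $R\times R$ matrix, since both factors are symmetric and the Hadamard product preserves symmetry), we get $Q(\boldsymbol{\Phi}+\varepsilon\bold{H})=Q(\boldsymbol{\Phi})+\varepsilon\int\big(\boldsymbol{\Phi}(t)\bold{M}\bold{H}(t)^\top+\bold{H}(t)\bold{M}\boldsymbol{\Phi}(t)^\top\big)\mathrm{d}t+\varepsilon^2 Q(\bold{H})$. By symmetry of $\bold{M}$ the two cross terms coincide, giving a linear-in-$\bold{H}$ contribution $2\int\boldsymbol{\Phi}(t)\bold{M}\bold{H}(t)^\top\,\mathrm{d}t$; the $\varepsilon^2$ term vanishes in the limit. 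Then the linear term $\ell$ is exactly linear in $\boldsymbol{\Phi}$, so $\ell(\boldsymbol{\Phi}+\varepsilon\bold{H})-\ell(\boldsymbol{\Phi})=\varepsilon\,\ell(\bold{H})=-2\varepsilon\int\!\!\int\boldsymbol{\Sigma}_{[f]}(s,t)(\bold{A}_{(D)}\odot\bold{K}(s))\bold{H}(t)^\top\,\mathrm{d}s\,\mathrm{d}t$. Combining, $\lim_{\varepsilon\to0}\varepsilon^{-1}\big(C_f(\boldsymbol{\Phi}+\varepsilon\bold{H})-C_f(\boldsymbol{\Phi})\big)=\int\big(2\boldsymbol{\Phi}(t)\bold{M}-2\int\boldsymbol{\Sigma}_{[f]}(s,t)(\bold{A}_{(D)}\odot\bold{K}(s))\,\mathrm{d}s\big)\bold{H}(t)^\top\,\mathrm{d}t$ (renaming the dummy variable $s\leftrightarrow t$ in the second integral to match the stated form, which is permissible since $\boldsymbol{\Sigma}_{[f]}(s,t)$ is the cross-covariance and the order of integration against $\bold{H}$ is just a relabeling). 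This is precisely the integral of $C_f'(\boldsymbol{\Phi})(s)\bold{H}(s)^\top$, identifying the derivative with the $L^2$-type Riesz representative displayed in the lemma.

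\smallskip
A couple of routine points deserve a line each. One should note that the limit is genuine (not just formal): the difference quotient equals the linear term exactly plus an $O(\varepsilon)$ remainder $\varepsilon Q(\bold{H})$, and $Q(\bold{H})$ is finite because $\bold{H}\in\mathscr{H}^{1\times R}$ and $\bold{M}$ is a fixed finite matrix; likewise $\ell(\bold{H})$ is finite provided $\boldsymbol{\Sigma}_{[f]}$ has square-integrable entries over $\mathcal{I}\times\mathcal{I}$, which is the standing assumption ensuring all integrals in \eqref{eq:C} make sense. So convergence as $\varepsilon\to0$ is immediate and uniform in nothing beyond what is already assumed.

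\smallskip
\textbf{Main obstacle.} There is no real analytic difficulty here — the functional is a quadratic-plus-linear polynomial in $\boldsymbol{\Phi}$, so Gâteaux-differentiability is automatic and the computation is mechanical. The only thing requiring care is bookkeeping: keeping track of which argument of $\boldsymbol{\Sigma}_{[f]}(s,t)$ is contracted against $\bold{K}$ versus against the test direction, and using symmetry of $\bold{A}_{(D)}^\top\bold{A}_{(D)}\ast\boldsymbol{\Lambda}^*$ to collapse the two cross terms in the quadratic expansion into the single factor $2$. After a dummy-variable relabeling the result matches the displayed formula exactly.
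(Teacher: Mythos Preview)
Your proposal is correct and follows essentially the same route as the paper: both expand $C_f(\boldsymbol{\Phi}+\varepsilon\bold{H})$ directly using the explicit quadratic-plus-linear form of \eqref{eq:C}, collect the $O(\varepsilon)$ term, and identify the Gâteaux derivative as its Riesz representative after a dummy-variable relabeling. Your write-up is slightly more explicit about the symmetry of $\bold{A}_{(D)}^\top\bold{A}_{(D)}\ast\boldsymbol{\Lambda}^*$ and about why the remainder $\varepsilon Q(\bold{H})$ vanishes, but the argument is the same.
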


\begin{proof}[Lemma \ref{lemma:fgateaudiff}]
Let $\boldsymbol{\Phi}\in \mathcal{H}^{1\times R}$. Let's show that there is a continuous (bounded) linear operator $C_f'(\boldsymbol{\Phi})$ such that  
\begin{equation}
 \forall \bold{H} \in \mathcal{H}^{1\times R}, \,
\langle C_f'(\boldsymbol{\Phi}) , \bold{H} \rangle
=\lim_{\substack{\alpha \to 0 \\ \alpha>0}}\frac{C_f(\boldsymbol{\Phi}+\alpha \bold{H}) - C_f(\boldsymbol{\Phi})}{\alpha}  
\label{eq.def_gateaux}
\end{equation}
We have
\begin{align*}
    C_f(\boldsymbol{\Phi}+\alpha \bold{H}) = \hspace{1mm}&C_f(\boldsymbol{\Phi}) + 2 \alpha \int \boldsymbol{\Phi}(t) (\bold{A}_{(D)}^\top \bold{A}_{(D)} \ast \boldsymbol{\Lambda}^*) \bold{H}(t)^\top  \\\phantom{=}&-2 \alpha \int \int \boldsymbol{\Sigma}_{[f]}(s, t) (\bold{A}_{(D)} \odot \bold{K}(s)) \bold{H}(t)^\top  \\ \phantom{=}&+ \alpha^2 \int \bold{H}(t) (\bold{A}_{(D)}^\top \bold{A}_{(D)} \ast \boldsymbol{\Lambda}^*) \bold{H}(t)^\top
\end{align*}
Therefore 
\begin{align*}
    \lim_{\substack{\alpha \to 0 \\ \alpha>0}}\frac{C_f(\boldsymbol{\Phi}+\alpha \bold{H}) - C_f(\boldsymbol{\Phi})}{\alpha}  = \hspace{1mm}&2\int \boldsymbol{\Phi}(t) (\bold{A}_{(D)}^\top \bold{A}_{(D)} \ast \boldsymbol{\Lambda}^*) \bold{H}(t)^\top \\ \phantom{=}&-2  \int \int \boldsymbol{\Sigma}_{[f]}(s, t) (\bold{A}_{(D)} \odot \bold{K}(s)) \bold{H}(t)^\top
\end{align*}
The term on the right-hand side can be rewritten $\langle C_f'(\boldsymbol{\Phi}) , \bold{H} \rangle$ where $C_f'(\boldsymbol{\Phi})$ is defined for any $s \in \mathcal{I}$ as
\begin{equation*}
    C_f'(\boldsymbol{\Phi})(s) = 2 \boldsymbol{\Phi}(s) (\bold{A}_{(D)}^\top \bold{A}_{(D)} \ast \boldsymbol{\Lambda}^*) -2 \int \boldsymbol{\Sigma}_{[f]}(s, t) (\bold{A}_{(D)} \odot \bold{K}(t))
\end{equation*}
which is bounded under mild conditions.
\end{proof}

\begin{proof}[Proposition \ref{proposition:solutions}]
We have
\begin{itemize}
    \item Using Lemma \ref{lemma:fgateaudiff} we have for any $\boldsymbol{\Phi} \mathcal{H}^{1 \times R}$ and $s \in \mathcal{I}$,
    \begin{align*}
        C_f'(\boldsymbol{\Phi})(s) = 2 \boldsymbol{\Phi}(s) (\bold{A}_{(D)}^\top \bold{A}_{(D)} \ast \boldsymbol{\Lambda}^*) -2 \int \boldsymbol{\Sigma}_{[f]}(s, t) (\bold{A}_{(D)} \odot \bold{K}(t))
    \end{align*}
    Setting the derivative to zero leads to
    \begin{equation*}
        \boldsymbol{\Phi}(s) (\bold{A}_{(D)}^\top \bold{A}_{(D)} \ast \boldsymbol{\Lambda}^*) - \int \boldsymbol{\Sigma}_{[f]}(s, t) (\bold{A}_{(D)} \odot \bold{K}(t)) = 0
    \end{equation*}
    which implies under non singular assumption,
    \begin{equation*}
        \boldsymbol{\Phi}(s) = \int \boldsymbol{\Sigma}_{[f]}(s, t) (\bold{A}_{(D)} \odot \bold{K}(t)) (\bold{A}_{(D)}^\top \bold{A}_{(D)} \ast \boldsymbol{\Lambda}^*)^{-1}
    \end{equation*}
    \item Similarly we have
    \begin{align*}
        C_d'(\bold{A}_d) = 2 \int \bold{A}_{d} (\bold{A}_{(-d), f}(t)^\top \bold{A}_{(-d), f}(t) \ast \boldsymbol{\Lambda}^*) -2 \int \int \boldsymbol{\Sigma}_{[d]}(s, t) (\bold{A}_{(-d), f}(t) \odot \bold{K}(s))
    \end{align*}
    Setting the derivative to zero leads to
    \begin{equation*}
        \bold{A}_d \int \bold{A}_{(-d), f}(t)^\top \bold{A}_{(-d), f}(t) \ast \boldsymbol{\Lambda}^* - \int \int \boldsymbol{\Sigma}_{[d]}(s, t) (\bold{A}_{(-d), f}(t) \odot \bold{K}(s)) = 0
    \end{equation*}
    which implies under non singular assumption,
    \begin{equation*}
        \bold{A}_d = \int \int \boldsymbol{\Sigma}_{[d]}(s, t) (\bold{A}_{(-d), f}(t) \odot \bold{K}(s)) \left (\int \bold{A}_{(-d), f}(t)^\top \bold{A}_{(-d), f}(t) \ast \boldsymbol{\Lambda}^* \right)^{-1}
    \end{equation*}
\end{itemize}
\end{proof}

\begin{lemma}
    \label{lemma:identifiability}
    Consider the following rank-$R$ latent factor CP decomposition model,
    \begin{align}
        \mathcal{X} &= \llbracket \bold{U}; \bold{V}_1; \dots; \bold{V}_D \rrbracket + \mathcal{E} \\
        \bold{U} &\sim \mathcal{N}(0, \boldsymbol{\Sigma})
    \end{align}
    with $\bold{V}_d \in \mathbb{R}^{p_d \times R}$ for any $d =1,\dots, D$, $\boldsymbol{\Sigma} \in \mathbb{R}^{R \times R}$ a positive definite matrix, and elements of $\mathcal{E}$ i.i.d. normally distributed with variance $\sigma^2$. The model is identifiable under the constraints:
    \begin{enumerate}[label=({A\arabic*})]
    \item \label{asp:A2} When deleting any row of $\bold{V}_M = \bold{V}_1 \odot \dots \odot \bold{V}_D$, the resulting matrix always has two distinct submatrices of rank $R$.
    \item \label{asp:A3} The following inequality is verified
    \begin{equation}
        \sum_{d = 1}^D \krank(\bold{V}_d) \geq R + (D - 1)
    \end{equation}
\end{enumerate}
\end{lemma}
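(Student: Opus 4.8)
The plan is to exploit that, $\mathcal{X}$ being a mean-zero Gaussian tensor, its law is completely determined by the covariance of $\bold{x}=\vectorize(\mathcal{X})$. Writing $\bold{V}_M=\bold{V}_1\odot\cdots\odot\bold{V}_D\in\mathbb{R}^{P\times R}$ with $P=\prod_d p_d$, the model gives
\begin{equation*}
\cov(\bold{x}) \;=\; \bold{V}_M\boldsymbol{\Sigma}\bold{V}_M^\top + \sigma^2\bold{I}_P ,
\end{equation*}
so identifiability reduces to showing that $\bold{V}_M\boldsymbol{\Sigma}\bold{V}_M^\top+\sigma^2\bold{I}=\tilde{\bold{V}}_M\tilde{\boldsymbol{\Sigma}}\tilde{\bold{V}}_M^\top+\tilde\sigma^2\bold{I}$ forces the two parameter sets to agree up to the permutation and scaling that CP-type decompositions always carry. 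First I would note that \ref{asp:A2} implies $\bold{V}_M$ has full column rank $R$ and $P>R$; since $\boldsymbol{\Sigma}\succ 0$, the matrix $\bold{V}_M\boldsymbol{\Sigma}\bold{V}_M^\top$ has rank exactly $R$, so $\sigma^2$ is the smallest eigenvalue of $\cov(\bold{x})$ (with multiplicity $P-R$), and the same holds on the tilde side; hence $\sigma^2=\tilde\sigma^2$ and, peeling it off, $\bold{M}:=\bold{V}_M\boldsymbol{\Sigma}\bold{V}_M^\top=\tilde{\bold{V}}_M\tilde{\boldsymbol{\Sigma}}\tilde{\bold{V}}_M^\top$.

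Next, because $\boldsymbol{\Sigma},\tilde{\boldsymbol{\Sigma}}$ are nonsingular and both factorizations have rank $R$, one has $\mathrm{col}(\bold{V}_M)=\mathrm{col}(\bold{M})=\mathrm{col}(\tilde{\bold{V}}_M)$, so there is a unique invertible $\bold{T}\in\mathbb{R}^{R\times R}$ with $\tilde{\bold{V}}_M=\bold{V}_M\bold{T}$, and then $\boldsymbol{\Sigma}=\bold{T}\tilde{\boldsymbol{\Sigma}}\bold{T}^\top$. It remains to show $\bold{T}$ is the product of a permutation matrix and a nonsingular diagonal matrix. This is where the Khatri--Rao structure and \ref{asp:A2}--\ref{asp:A3} enter: both $\bold{V}_M=\bold{V}_1\odot\cdots\odot\bold{V}_D$ and $\bold{V}_M\bold{T}=\tilde{\bold{V}}_1\odot\cdots\odot\tilde{\bold{V}}_D$ have each column equal to a Kronecker product, so column $r$ of $\bold{V}_M\bold{T}$, which equals $\sum_{r'}T_{r'r}\,\bold{v}_{1r'}\otimes\cdots\otimes\bold{v}_{Dr'}$, must itself be a rank-one tensor. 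I would then invoke a Kruskal-type permutation lemma to conclude that each column of $\bold{T}$ has a single nonzero entry: \ref{asp:A2} is exactly the non-degeneracy hypothesis the permutation lemma requires (after deleting any row of $\bold{V}_M$ two independent full-rank $R\times R$ submatrices remain), while the fact that $\bold{M}$ is recovered as a positive-definite Gram matrix $\bold{V}_M\boldsymbol{\Sigma}\bold{V}_M^\top$ -- so each latent component is effectively ``seen twice'' -- is what lowers the usual Kruskal budget $2R+(D-1)$ to the $R+(D-1)$ of \ref{asp:A3}. Once $\bold{T}=\bold{P}\bold{D}$, one reads off $\tilde{\bold{V}}_M=\bold{V}_M\bold{P}\bold{D}$; matching the rank-one factorization of each column gives $\tilde{\bold{V}}_d=\bold{V}_d\bold{P}\bold{D}_d$ with $\prod_d\bold{D}_d=\bold{D}$, and $\tilde{\boldsymbol{\Sigma}}=\bold{D}^{-1}\bold{P}^\top\boldsymbol{\Sigma}\bold{P}\bold{D}^{-1}$, i.e., the parameters agree up to the inherent permutation/scaling indeterminacies, which is the claimed identifiability.

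The main obstacle is this last step: converting the relation $\tilde{\bold{V}}_M=\bold{V}_M\bold{T}$ between Khatri--Rao-structured matrices into ``$\bold{T}$ is a scaled permutation'' on the slender k-rank budget of \ref{asp:A3}. Applying Kruskal's theorem to $\bold{V}_M\bold{T}$ by itself would demand $\sum_d\krank(\bold{V}_d)\geq 2R+(D-1)$, so one genuinely has to use the extra structure that $\bold{M}=\bold{V}_M\boldsymbol{\Sigma}\bold{V}_M^\top$ is symmetric and positive definite -- equivalently, that the latent mode is observed through its whole covariance $\boldsymbol{\Sigma}$ rather than a single realization. Bookkeeping this ``doubling'' carefully inside the permutation lemma, and checking that \ref{asp:A2} supplies precisely the non-degeneracy the lemma needs once a row has been removed, is the technical heart of the argument; the surrounding reductions (to the covariance, identification of $\sigma^2$, and the column-space matching) are routine.
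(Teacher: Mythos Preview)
Your reductive steps coincide with the paper's: both pass to $\cov(\bold{x})=\bold{V}_M\boldsymbol{\Sigma}\bold{V}_M^\top+\sigma^2\bold{I}$, identify $\sigma^2$ from the rank constraint, and reduce to $\bold{V}_M\boldsymbol{\Sigma}\bold{V}_M^\top=\tilde{\bold{V}}_M\tilde{\boldsymbol{\Sigma}}\tilde{\bold{V}}_M^\top$. The divergence is exactly at the step you yourself flag as the ``technical heart''. You write $\tilde{\bold{V}}_M=\bold{V}_M\bold{T}$ and propose to force $\bold{T}$ to be a scaled permutation via an unspecified ``Kruskal-type permutation lemma'', with \ref{asp:A2} supplying non-degeneracy and the Gram structure somehow contributing the missing $R$ to the k-rank budget. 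This remains a sketch: no concrete lemma is named, no mechanism is exhibited by which positive-definiteness of $\boldsymbol{\Sigma}$ relaxes $2R+(D-1)$ to $R+(D-1)$, and the role you assign to \ref{asp:A2} (a hypothesis of the permutation lemma) is not one it plays in the standard CP uniqueness toolkit.

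The paper's route makes both missing ingredients explicit and is worth contrasting with yours. It rewrites the Gram equality as $(\bold{V}_M\boldsymbol{\Sigma}^{1/2})(\bold{V}_M\boldsymbol{\Sigma}^{1/2})^\top=(\tilde{\bold{V}}_M\tilde{\boldsymbol{\Sigma}}{}^{1/2})(\tilde{\bold{V}}_M\tilde{\boldsymbol{\Sigma}}{}^{1/2})^\top$ and invokes the Anderson--Rubin factor-analysis identifiability theorem: condition \ref{asp:A2} is precisely the Anderson--Rubin row-deletion hypothesis, and it yields an orthogonal $\bold{M}$ with $\bold{V}_M\boldsymbol{\Sigma}^{1/2}=\tilde{\bold{V}}_M\tilde{\boldsymbol{\Sigma}}{}^{1/2}\bold{M}$. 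That matrix equality is the unfolding of a $(D{+}1)$-way CP equality with factor list $(\bold{V}_1,\dots,\bold{V}_D,\boldsymbol{\Sigma}^{1/2})$, so the Sidiropoulos--Bro k-rank criterion can be applied directly; the extra mode $\boldsymbol{\Sigma}^{1/2}$ contributes $\krank(\boldsymbol{\Sigma}^{1/2})=R$, which is exactly the ``doubling'' you were reaching for, and \ref{asp:A3} then closes the inequality. In short, your outline and intuition are right, but the device that actually realizes the ``seen twice'' heuristic is the square-root trick plus Anderson's theorem, which converts the latent covariance into a bona fide additional CP factor of full k-rank and lets the standard uniqueness result finish the job.
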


\begin{proof}
    The proof is largely based on \cite{Giordani2020} (Section 4.1). Identifiability is defined as the identifiability of the covariance matrix $\boldsymbol{\Sigma}_{\bold{x}}$ of $\bold{x}$. The residual variance $\sigma$ is identifiable since, using the fact that $\rank(\boldsymbol{\Sigma}) = R$, it is defined by the rank $R$ of the decomposition such as:
    \begin{equation}
        \rank(\boldsymbol{\Sigma}_{\bold{x}} - \sigma^2 \bold{I}) = R
    \end{equation}
    Now, since $\boldsymbol{\Sigma}_{\bold{x}} - \sigma^2 \bold{I}= \bold{V}_M \boldsymbol{\Sigma} \bold{V}_M^\top$ assume we have $\bold{V}_1', \dots \bold{V}_D'$ and $\boldsymbol{\Sigma}'$ such that:
    \begin{equation}
        \bold{V}_M \boldsymbol{\Sigma} \bold{V}_M^\top = \bold{V}_M' \boldsymbol{\Sigma}' \bold{V}_M'{}^\top
    \end{equation}
    which is equivalent to:
    \begin{equation}
        \bold{V}_M \boldsymbol{\Sigma}^{1/2} \boldsymbol{\Sigma}^{1/2} \bold{V}_M^\top = \bold{V}_M' \boldsymbol{\Sigma}'{}^{1/2} \boldsymbol{\Sigma}'{}^{1/2} \bold{V}_M'{}^\top
    \end{equation}
    Under condition \ref{asp:A2}, since $\rank(\boldsymbol{\Sigma}) = R$, if any row of $\bold{V}_M \boldsymbol{\Sigma}^{1/2}$ is deleted there remain two distinct submatrices of rank $R$. By considering Theorem 5.1 in \cite{anderson1956statistical}, we known there exists an orthogonal matrix $\bold{M}$ such that $\bold{V}_M \boldsymbol{\Sigma}^{1/2} = \bold{V}_M' \boldsymbol{\Sigma}'{}^{1/2} \bold{M}$. Finally, using condition \ref{asp:A3} and the fact that $\rank(\boldsymbol{\Sigma}) = R$, we have
    \begin{equation}
        \krank(\boldsymbol{\Sigma}^{1/2}) + \sum_{d=1}^D \krank(\bold{V}_d) \geq 2R + (D - 1)
    \end{equation}
    Using the PARAFAC identifiability result provided in \cite{Sidiropoulos2000}, up to permutation and scaling defined by a matrix $\bold{P}$, we can assert that $\bold{V}_d = \bold{V}_M' \bold{P}$ for any $d=1,\dots, D$ and $\boldsymbol{\Sigma}^{1/2} = \bold{M} \boldsymbol{\Sigma}'^{1/2} \bold{P}$, which implies $\boldsymbol{\Sigma} = \bold{P}^\top \boldsymbol{\Sigma}' \bold{P}$. Identifiability is therefore verified.
\end{proof}

\begin{proof}[Proposition \ref{proposition:identifiability}]
    We follow the same steps as in \cite{Han2023}:
    \begin{itemize}
        \item $\mathscr{H}$ is finite dimensional. Denoting $p_{\phi}$ the dimension of $\mathscr{H}$. We can decompose $\phi_{r}$ and $\tilde{\phi}_{r}$ using an orthonormal basis $\{ \psi_k \}_{1 \leq k \leq p_{\phi}}$ of $\mathscr{H}$,
        \begin{equation*}
            \phi_{r} = \sum^{p_{\phi}}_{k=1} c_{kr} \psi_{k}\ \; \; \; \tilde{\phi}_{r} = \sum^{p_{\phi}}_{k=1} \tilde{c}_{kr} \psi_{k}\
        \end{equation*}
        Considering the following latent factor CP decomposition, 
        \begin{equation*}
            \mathcal{B} = \sum^{R}_{r=1} u_{r} (\bold{c}_{r} \circ \bold{a}_{1 r} \circ \dots \circ \bold{a}_{D r}) \; \; \; \tilde{\mathcal{B}} = \sum^{R}_{r=1} \tilde{u_{r}} (\tilde{\bold{c}}_{r} \circ \tilde{\bold{a}}_{1 r} \circ \dots \circ \tilde{\bold{a}}_{D r})
        \end{equation*}
        We can rewrite (\ref{eq:model:lowrankfu})
        \begin{align*}
            \sum_{r=1}^R u_r \phi_{r}(t) (\bold{a}_{1r} \circ \dots \circ \bold{a}_{Dr}) &= \sum_{r=1}^R u_r \sum^{p_{\phi}}_{k=1} c_{kr} \psi_{k}(t) (\bold{a}_{1r} \circ \dots \circ \bold{a}_{Dr})\\ &= \sum^{p_{\phi}}_{k=1} \sum_{r=1}^R u_r  (c_{kr} (\bold{a}_{1r} \circ \dots \circ \bold{a}_{Dr})) \psi_{k}(t) \\ &= \sum^{p_{\phi}}_{k} \mathcal{B}_k \psi_{k}(t)
        \end{align*}
        where $\mathcal{B}_k$ denotes the sub-tensor of $\mathcal{B}$ defined by fixing the first mode (associated with basis coefficients $\bold{C}$) to the $k$th dimension. Similarly, we have
        \begin{equation*}
            \sum_{r=1}^R \tilde{u}_r \tilde{\phi}_{r}(t) (\tilde{\bold{a}}_{1r} \circ \dots \circ \tilde{\bold{a}}_{Dr}) = \sum^{p_{\phi}}_{k=1} \tilde{\mathcal{B}}_k \psi_{k}(t)
        \end{equation*}
        From the equality
        \begin{equation*}
            \sum^{p_{\phi}}_{k=1} \mathcal{B}_k \psi_{k}(t)= \sum^{p_{\phi}}_{k=1} \tilde{\mathcal{B}}_k \psi_{k}(t)
        \end{equation*}
        and the orthonormality of $\{ \psi_k \}_{1 \leq k \leq p_{\phi}}$, we have that $\mathcal{B} = \tilde{\mathcal{B}}$. From the identifiability result given in Lemma \ref{lemma:identifiability}, assuming scaling and order indeterminacies fixed, we have $\boldsymbol{\Theta} = \tilde{\boldsymbol{\Theta}}$ under the assumption that deleting any row of $\bold{C} \odot \bold{A}_1 \odot \dots \odot \bold{A}_D$ results in a matrix having two distinct full-rank submatrices and considering:
        \begin{equation*}
            \krank(\bold{C}) + \sum_{d=1}^D \krank(\bold{A}_d) \geq R + (D - 1)
        \end{equation*}
        \item $\mathscr{H}$ is infinite dimensional. Considering a regular grid of $\mathcal{I}$ with size $K$, we denote $\mathcal{X}^{(K)}$ and $\tilde{\mathcal{X}}{}^{(K)}$, the values of $\mathcal{X}$ and $\tilde{\mathcal{X}}$ respectively at the time points of the grid. Since $\mathcal{X} = \tilde{\mathcal{X}}$, we have $\mathcal{X}^{(K)} = \tilde{\mathcal{X}}{}^{(K)}$. Considering $\boldsymbol{\phi}_r^{(K)} \in \mathbb{R}^K$ the vector of values of $\phi_r$ at the time points of the grid, we have from the previous equality, 
        \begin{equation*}
            \sum_{r=1}^R u_r  (\boldsymbol{\phi}_r^{(K)} \circ \bold{a}_{1r} \circ \dots \circ \bold{a}_{Dr}) = \sum_{r=1}^R \tilde{u}_r (\tilde{\boldsymbol{\phi}}_r{}^{(K)} \circ \tilde{\bold{a}}_{1r} \circ \dots \circ \tilde{\bold{a}}_{Dr})
        \end{equation*}
        which also corresponds to a SupCP model. In this context, for $K$ large enough, we are guaranteed that $\krank(\boldsymbol{\Phi}^{(K)})=\krank(\boldsymbol{\Phi})$. Using identifiability result from Lemma \ref{lemma:identifiability}, we can assert that  $\boldsymbol{\Phi}^{(K)} = \tilde{\boldsymbol{\Phi}}{}^{(K)}$ as long as deleting any row of $(\boldsymbol{\Phi}^{(K)} \odot \bold{A}_1 \odot \dots \odot \bold{A}_D)$ results in a matrix which has two full rank distinct submatrices and
        \begin{equation*}
           \krank(\boldsymbol{\Phi}^{(K)}) + \sum_{d=1}^D \krank(\bold{A}_d) \geq R + (D - 1)
        \end{equation*}
        Finally, we can assert that there exists $K_T$ such that for all $K \geq K_T$ the first condition is necessarily verified and $\krank(\boldsymbol{\Phi}^{(K)}) = R = \krank(\boldsymbol{\Phi})$
    \end{itemize}
\end{proof}

    

\begin{proof}[Proposition \ref{proposition:scores}]  
    Denoting $\boldsymbol{e}_{i}$ the vector of measurement errors $\{ \mathcal{E} _{ik}\}_{1\leq k \leq N_i}$ of sample $i$. To make notations easier to understand we consider that $\bold{U}_i \in \mathbb{R}^R$ (and not $\mathbb{R}^{1\times R}$) We have
    \begin{equation*}
        \bold{y}_{i} = \bold{F}_{i} \bold{U}_i + \boldsymbol{e}_{i}
    \end{equation*}
    From this, we propose to rewrite jointly the observations and the scores, as:
    \begin{equation*}
        \begin{bmatrix}
        \bold{y}_{i}\\
        \bold{U}_{i}\\
        \end{bmatrix} =
        \begin{bmatrix}
        0\\
        0
        \end{bmatrix}
        +
        \begin{bmatrix}
        \bold{F}_{i} & \bold{I}\\
        \bold{I} & 0
        \end{bmatrix}
        \begin{bmatrix}
        \bold{U}_i\\
        \boldsymbol{e}_{i}
        \end{bmatrix}
    \end{equation*}
    Using this expression, and the jointly Gaussian assumptions on $\bold{U}_i$ and $\boldsymbol{e}_i$ we clearly see that $\bold{U}_i$ and $\bold{y}_i$ are also jointly Gaussian with joint law:
    \begin{align*}
       \begin{bmatrix}
        \bold{y}_{i}\\
        \bold{U}_{i}\\
        \end{bmatrix} &\sim
        N \left ( \begin{bmatrix}
        \bold{F}_{i} & \bold{I}\\
        \bold{I} & 0
        \end{bmatrix} \begin{bmatrix}
        \boldsymbol{\mu}\\
        0\\
        \end{bmatrix} ,
        \begin{bmatrix}
        \bold{F}_{i} & \bold{I}\\
        \bold{I} & 0
        \end{bmatrix}
        \begin{bmatrix}
         \boldsymbol{\Lambda} & 0\\
         0 & \sigma^2 \bold{I} 
        \end{bmatrix}
        \begin{bmatrix}
        \bold{F}_{i}^{\top} & \bold{I}\\
        \bold{I} & 0
        \end{bmatrix}
        \right ) \\
        &\sim
        N \left ( \begin{bmatrix}
        \bold{F}_{i} \boldsymbol{\mu}\\
        \boldsymbol{\mu}\\
        \end{bmatrix} ,
        \begin{bmatrix}
        \bold{F}_{i} \boldsymbol{\Lambda} \bold{F}_i^{\top} + \sigma^2 \bold{I} & \bold{F}_i \boldsymbol{\Lambda}\\
        \bold{\Sigma} \bold{F}_i^{\top} & \boldsymbol{\Lambda}
        \end{bmatrix}
        \right )
    \end{align*}
    To obtain the best predictions for the scores, we now consider the conditional distribution of $\bold{U}_i$ and suggest using the conditional expectation as the best predictor. Using the standard formulation of the Gaussian conditional distribution (\cite{Rasmussen2006}) we obtain :
    \begin{align*}
        \mathbb{E}[\bold{U}_i | \boldsymbol{\mathcal{Y}}_i] &= \boldsymbol{\mu} + \boldsymbol{\Lambda} \bold{F}_i^\top (\bold{F}_i \boldsymbol{\Lambda} \bold{F}_i^\top + \sigma^2\bold{I})^{-1} \boldsymbol{\bold{y}}_i \\
        \boldsymbol{\Sigma}_{\bold{U}_i | \boldsymbol{\mathcal{Y}}_i} &= \boldsymbol{\Lambda} - \boldsymbol{\Lambda} \bold{F}_i^\top (\bold{F}_i \boldsymbol{\Lambda} \bold{F}_i^\top + \sigma^2\bold{I})^{-1} \bold{F}_i \boldsymbol{\Lambda} 
    \end{align*}
    
\end{proof}

\clearpage

\section{Additional experiments}
\label{sec:annex:exp}

\subsection{Simulations: higher order}

We propose to compare decomposition methods on order $D=3$ tensors using the simulation setting of Section \ref{section:simulations}. Dimensions of tensors considered are $(30, 5, 5)$. Results are displayed in Figure \ref{fig:sim2}. We still observe a notable advantage of our approach in high and medium noise settings. On another hand MFPCA seems to perform worse than before. We suppose that this is caused by the lack of assumptions on the underlying structure of tensors.

\subsection{Simulations: misspecified model}

We propose here to investigate how decomposition methods perform on order $D=3$ tensors assuming no underlying low-rank structure. In this context, data is generated using a Fourier basis of order $3$ by sampling Fourier coefficients from a multivariate normal distribution for each tensor entry. The dimensions of tensors considered are $(30, 8, 8)$. For tensor decomposition approaches a rank $R=3$ decomposition is considered. Experiments are repeated 10 times. Results are displayed in Figure \ref{fig:sim3}. As expected, we observe that MFPCA performs better than tensor decomposition approaches in dense and low-sparsity settings since the method does not assume any underlying low-rank structure. Interestingly, our method still performs better than MFPCA for medium-sparsity and high-sparsity settings.

\begin{figure}
    \centering
    \includegraphics[scale=0.45]{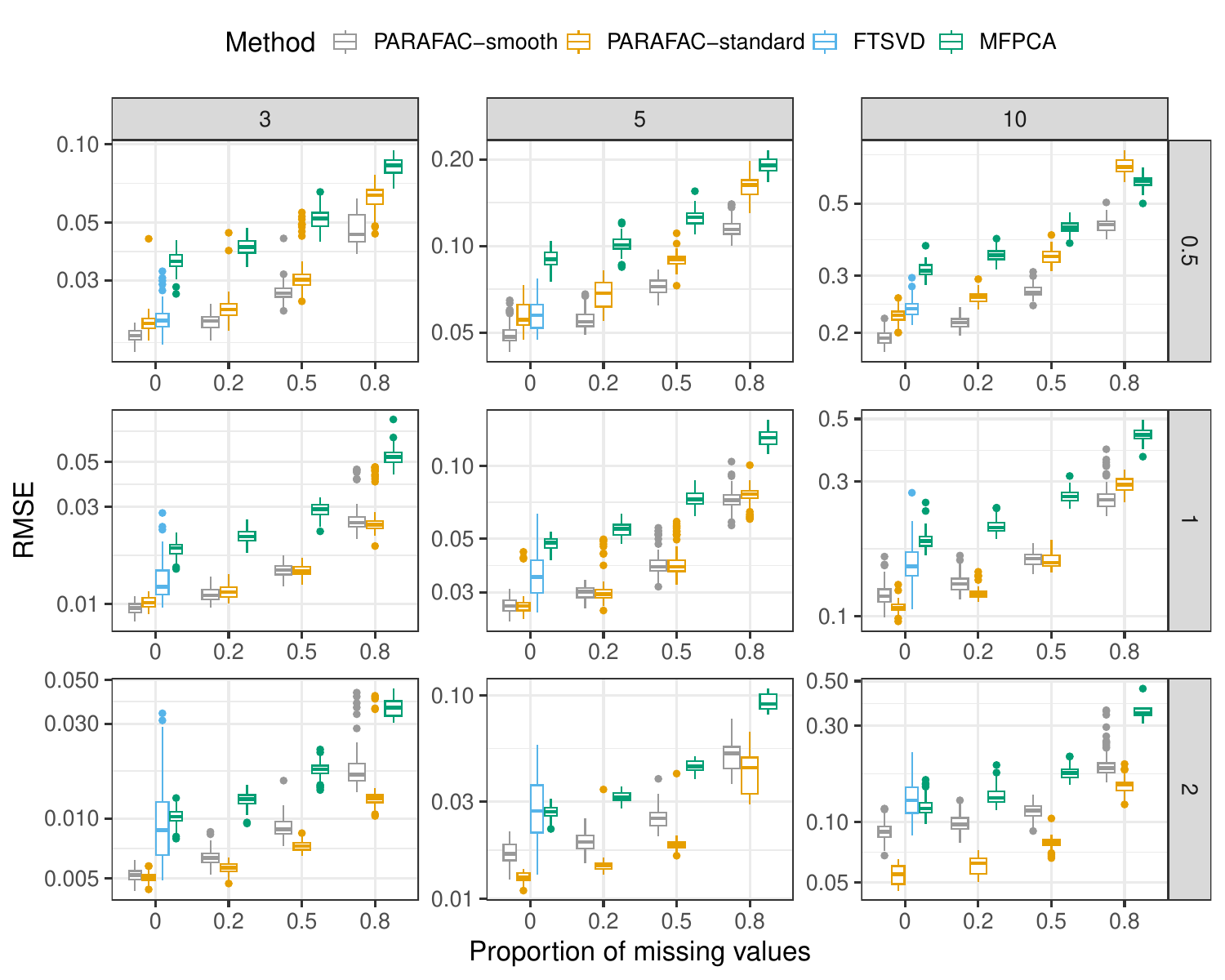}
    \caption{Reconstruction error on order $D=3$ tensors using $p_1 = 10$, with $N=100$ samples. Comparing LF-PARAFAC, PARAFAC (standard), FTSVD, and MFPCA for different values of $R$ (column-wise facets), different proportions of missing values (x-axis), and different signal-to-noise ratios (SNR) (row-wise facets). Plot obtained from $100$ simulation runs.}
    \label{fig:sim2}
\end{figure}

\begin{figure}
    \centering
    \includegraphics[scale=0.45]{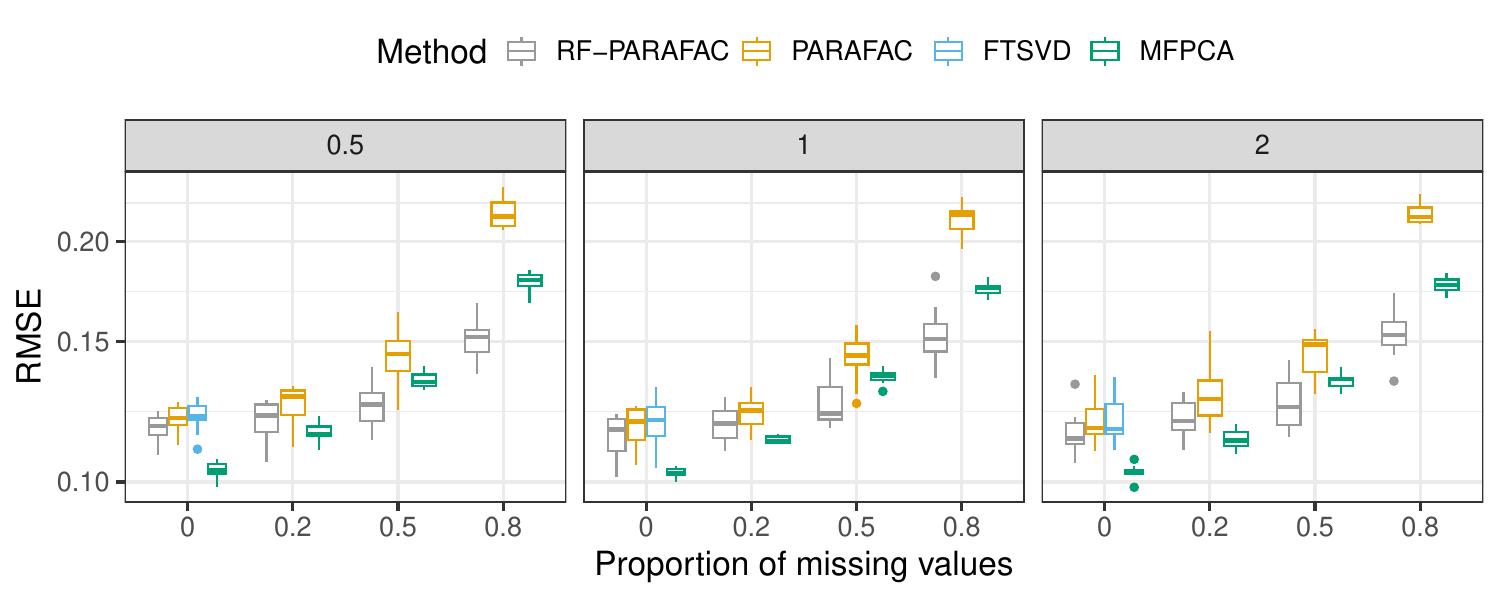}
    \caption{Reconstruction error on order $D=3$ tensors using $p_1=p_2=8$, with $N=100$ samples. Comparing LF-PARAFAC, PARAFAC (standard), FTSVD, and MFPCA for different proportions of missing values (x-axis) and signal-to-noise ratios (SNR) (column-wise facets). Plot obtained from $10$ simulation runs.}
    \label{fig:sim3}
\end{figure}

\begin{figure}
    \centering
    \includegraphics[scale=0.45]{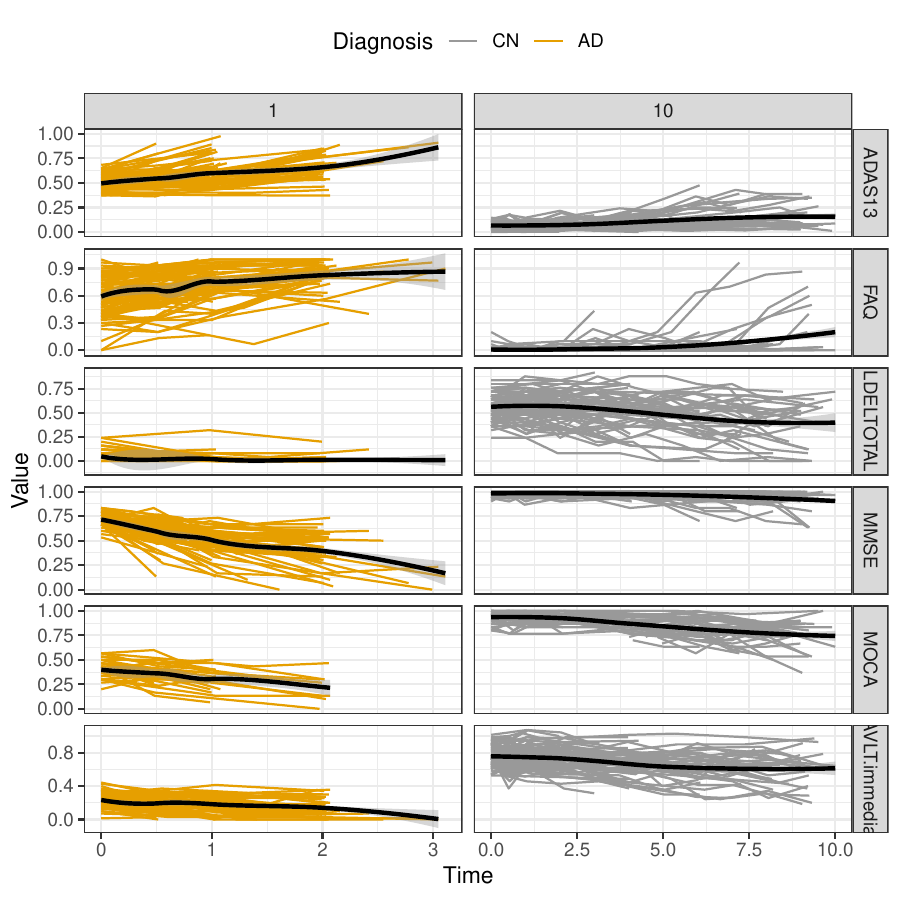}
    \includegraphics[scale=0.45]{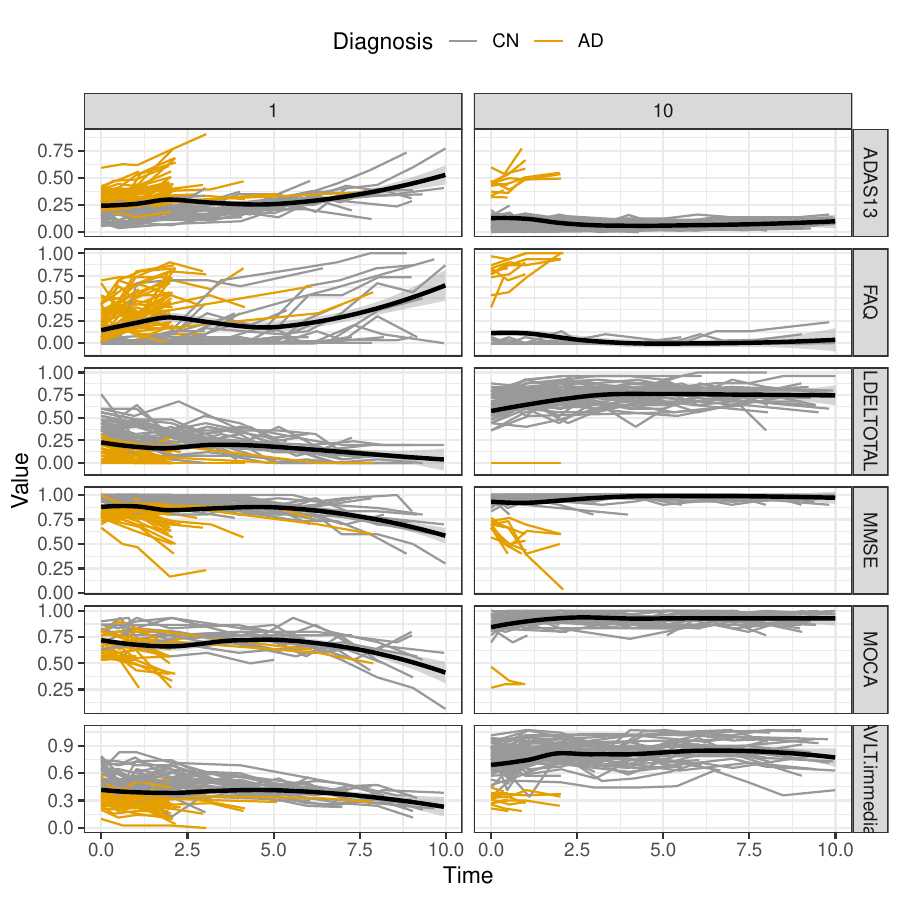}
    \includegraphics[scale=0.45]{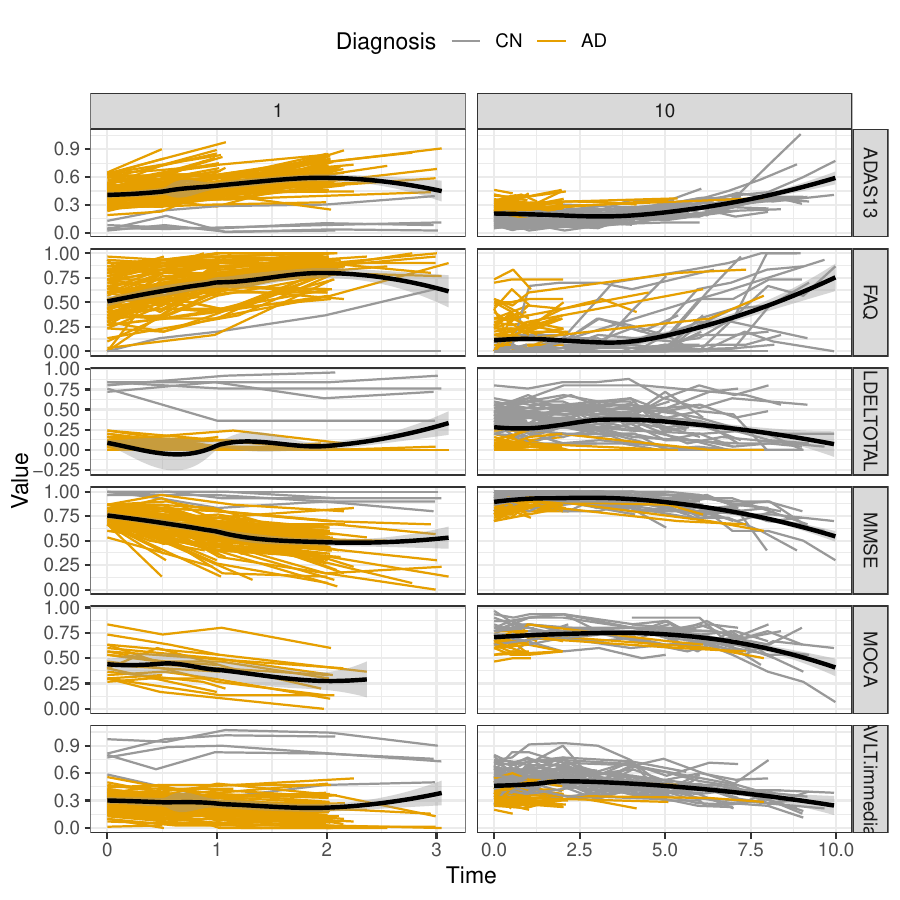}
    \includegraphics[scale=0.45]{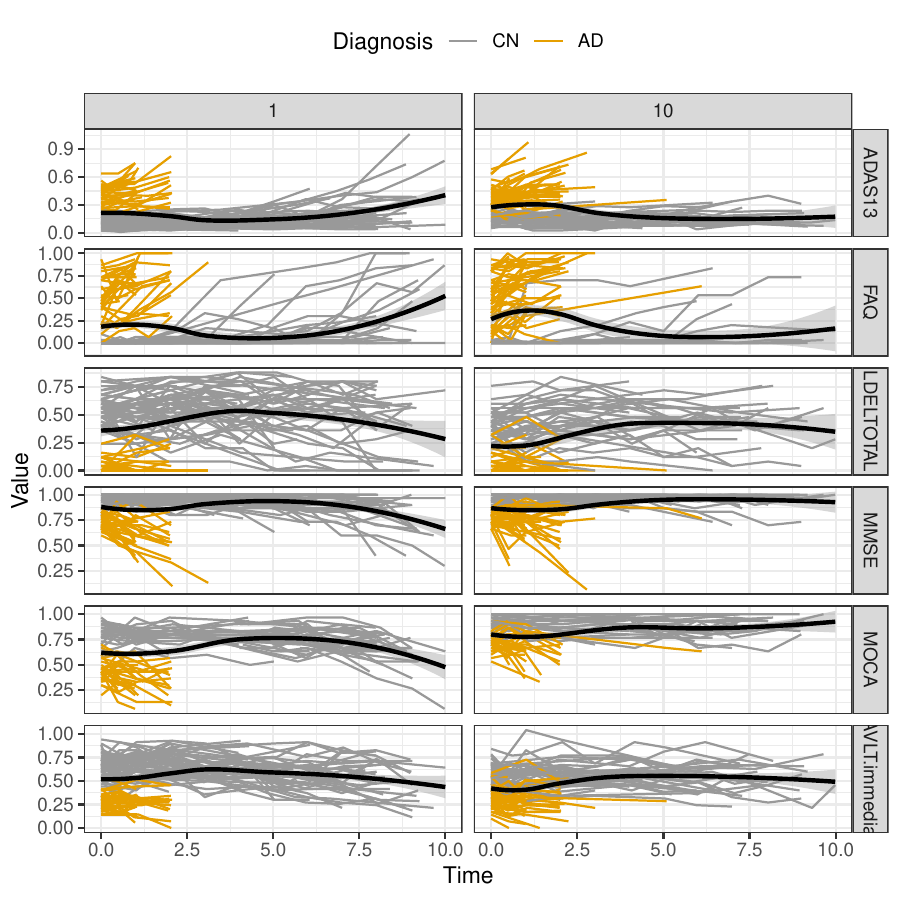}
    \caption{Cognitive score trajectories in first (left column) and last deciles (right column) of rank-1 sample-mode scores (top left), of rank-2 sample-mode scores (top right), of rank-3 sample-mode scores (bottom left), and of rank-4 sample-mode scores (bottom right)}
    \label{fig:sim2}
\end{figure}

\clearpage

\bibliographystyle{abbrvnat}
\bibliography{main}

\end{document}